\documentclass[letterpaper,11pt]{article}

\usepackage[margin=1in]{geometry}

\usepackage{amsthm}
\usepackage{amsfonts}
\usepackage{amssymb}
\usepackage{amsmath}
\usepackage{booktabs}
\usepackage[hidelinks]{hyperref}
\usepackage{makeidx}
\usepackage{tabularx, booktabs}
\usepackage{tabu}
\usepackage{colortbl}
\usepackage{enumitem}
\usepackage{multirow}
\usepackage{graphicx}
\usepackage{booktabs}
\usepackage{listings}
\usepackage{marginnote}
\usepackage{comment}
\usepackage{thmtools}
\usepackage{thm-restate}
\usepackage{authblk}
\usepackage[noend,ruled, linesnumbered]{algorithm2e}
\usepackage[capitalize]{cleveref}
\usepackage[utf8]{inputenc}

\setitemize{itemsep=0pt,topsep=5pt,parsep=0pt,partopsep=0pt}
\setenumerate{itemsep=0pt,topsep=5pt,parsep=0pt,partopsep=0pt}

\setlength{\affilsep}{2em}   

\usepackage{graphicx}
\graphicspath{{./figures/}}

\SetKwComment{tcr}{$\triangleright$\ }{}
\SetCommentSty{textsf}

\newtheorem{theorem}{Theorem}[section]
\newtheorem{lemma}[theorem]{Lemma}
\newtheorem{fact}[theorem]{Fact}
\newtheorem{observation}[theorem]{Observation}
\newtheorem{corollary}[theorem]{Corollary}
\newtheorem{definition}[theorem]{Definition}

\theoremstyle{definition}
\newtheorem{problem}{Problem}
\newtheorem{remark}[theorem]{Remark}

\newtheorem{example}[theorem]{Example}

\crefname{problem}{Problem}{Problems}
\crefname{miniclaim}{Claim}{Claims}

\newcommand{\FF}{{\cal F}}
\newcommand{\HH}{{\cal H}}

\newcommand{\down}[1]{\left\lfloor#1\right\rfloor}

\renewcommand{\O}{\mathcal{O}}
\newcommand{\ceil}[1]{\left\lceil{#1}\right\rceil}
\newcommand{\floor}[1]{\left\lfloor{#1}\right\rfloor}

\newcommand{\HAM}{\mathsf{Ham}}

\newcommand{\num}{t}

\newcommand{\DD}{\widetilde{d}}

\newcommand{\dd}{\mathinner{.\,.}}

\newcommand{\Z}{\mathbb{Z}}

\newcommand{\E}{\mathbb{E}}

\newcommand{\F}{\mathcal{F}}
\newcommand{\G}{\mathcal{G}}

\newcommand{\Oh}{O}
\newcommand{\Ohtilde}{\tilde{\Oh}}
\newcommand{\sub}{\subseteq}
\newcommand{\supp}{\mathrm{supp}}

\newcommand{\eps}{\varepsilon}
\newcommand{\teps}{\tilde \eps}

\newcommand{\tO}{\tilde{O}}
\newcommand{\OO}{\hat O}
\newcommand{\OCC}{\textit{occ}}
\newcommand{\tteps}{\hat{\eps}}

\def\polylog{\operatorname{polylog}}

\newcommand{\defproblem}[4]{
  \vspace{2mm}
  \noindent\fbox{
  \begin{minipage}{0.96\textwidth}
\begin{problem}\label{#1}
  \textsf{#2}

  \smallskip
  \noindent
  {\bf{Input:}} #3

  \smallskip
  \noindent
  {\bf{Output:}} #4
\end{problem}
  \end{minipage}
  }
  \vspace{2mm}
}

\usepackage{xcolor}

\definecolor{purple}{rgb}{0.58, 0.44, 0.86}

\makeatletter
\newcommand*{\centerfloat}{%
	\parindent \z@
	\leftskip \z@ \@plus 1fil \@minus \textwidth
	\rightskip\leftskip
	\parfillskip \z@skip}
\makeatother

\title{Approximating Text-to-Pattern Hamming Distances}

\author[1]{Timothy M. Chan}
\author[2]{Shay Golan}
\author[2]{Tomasz Kociumaka}
\author[2]{Tsvi Kopelowitz}
\author[2]{Ely Porat}
\affil[1]{Department of Computer Science, University of Illinois at Urbana-Champaign, IL, USA}
\affil[ ]{\texttt{tmc@illinois.edu}}
\affil[2]{Department of Computer Science, Bar-Ilan University, Ramat Gan, Israel}
\affil[ ]{\texttt{golansh1@cs.biu.ac.il, kociumaka@mimuw.edu.pl, kopelot@gmail.com, porately@cs.biu.ac.il}}

\date{\vspace{-1cm}}

\begin{document}

\maketitle

\begin{abstract}
We revisit a fundamental problem in string matching: given a pattern of length $m$ and a text of length $n$, both over an alphabet of size $\sigma$, compute the Hamming distance (i.e., the number of mismatches) between the pattern and the text at every location.
Several randomized $(1+\varepsilon)$-approximation algorithms have been proposed in the literature (e.g., by Karloff (Inf.\ Proc.\ Lett., 1993), Indyk (FOCS 1998), and Kopelowitz and Porat (SOSA 2018)), with running time of the
form $O(\varepsilon^{-O(1)}n\log n\log m)$, all using fast Fourier transform (FFT\@).  We describe a simple randomized $(1+\varepsilon)$-approximation algorithm that is faster and does not need FFT\@.  
Combining our approach with additional ideas leads to numerous new results (all Monte-Carlo randomized) in different settings:

\begin{enumerate}
\item We obtain
the first \emph{linear-time} approximation algorithm; the running time is $O(\varepsilon^{-2}n)$.  In fact, the time bound can be made slightly sublinear in $n$ if the alphabet size $\sigma$ is small (by using bit packing tricks).
\item We apply our approximation algorithms to obtain a faster \emph{exact} algorithm computing all Hamming distances up to a given threshold $k$; its running time is $O(n + \min(\frac{nk\sqrt{\log m}}{\sqrt{m}},\frac{nk^2}{m}))$, which improves previous results by logarithmic factors and is linear if $k\le \sqrt{m}$.
\item We alternatively obtain approximation algorithms with better \emph{$\varepsilon$-dependence}, by using rectangular matrix multiplication. 
In fact, the time bound is $O(n \polylog n)$
when the pattern is sufficiently long, i.e., $m\ge \varepsilon^{-c}$ for a specific constant $c$.  Previous algorithms with the best $\varepsilon$-dependence require $O(\varepsilon^{-1}n\polylog n)$ time.
\item
When $k$ is not too small,
we obtain a truly \emph{sublinear-time} algorithm to find all locations with Hamming distance approximately (up to a constant factor) less than $k$, in 
$O((n/k^{\Omega(1)}+\textit{occ})n^{o(1)})$ time, where $\textit{occ}$ is the output size. The algorithm leads to a \emph{property tester} for pattern matching,
with high probability returning true if an exact match exists and false if the Hamming distance is more than $\delta m$ at every location, running in
$O((\delta^{-1/3}n^{2/3} + \delta^{-1}\frac nm)\polylog n)$ time.
\item
We obtain a \emph{streaming} algorithm to report all locations
with Hamming distance approximately less than $k$, using
$O(\eps^{-2}\sqrt{k}\polylog n)$ space.  Previously,
streaming algorithms were known for the exact problem with
$O(k\polylog n)$ space (which is tight up to the $\polylog n$ factor) or for the approximate
problem with  $O(\eps^{-O(1)}\sqrt{m}\polylog n)$ space. 
For the special case of $k=m$, we improve the space usage to $O(\eps^{-1.5}\sqrt{m}\polylog n)$.

\end{enumerate}
\end{abstract}
\setcounter{page}{0}
\thispagestyle{empty}

\newpage

\section{Introduction}\label{sec:intro}

We study a fundamental problem in string matching: given a pattern of length $m$ and a text of length $n$ over an alphabet of size $\sigma$,
compute the Hamming distance (i.e., the number of mismatches)
between the pattern and the text at every location.
Of particular interest is the version with a fixed threshold, known as the
\emph{$k$-mismatch} problem: compute the Hamming distances only for locations with distances at most a given value~$k$.
This includes as a special case the decision problem of testing
whether the Hamming distance is at most $k$ at each location (in particular, deciding whether there exists a location with at most $k$ mismatches).

The problem has an extensive history, spanning over four decades; see \cref{tbl} for a summary.  For arbitrary~$\sigma$, the best time bound, $\tO(n + \frac{nk}{\sqrt{m}})$,\footnote{
Throughout the paper, $\tO$ hides polylogarithmic factors, and $\OO$ hides
$n^{o(1)}$ factors.  Additionally, $\tO_\eps$ and $\OO_\eps$ may hide $\eps^{-O(1)}$ factors.
}
 by Gawrychowski and Uznański (ICALP 2018)~\cite{GawUzn}, subsumes all previous bounds up to logarithmic factors;
their paper also provides conditional lower bounds suggesting that no substantially faster ``combinatorial'' algorithms are possible.

\begin{table}[h]
\renewcommand{\arraystretch}{1.25}
\centering
\begin{tabular}{ll}
\toprule
Fischer and Paterson~\cite{FisPat} & $O(\sigma n\log m)$\\
Abrahamson~\cite{DBLP:journals/siamcomp/Abrahamson87} & $O(n\sqrt{m\log m})$\\
Landau and Vishkin~\cite{LanVis,LanVis89} / 
Galil and Giancarlo~\cite{GalGia} & $O(nk)$\\
Sahinalp and Vishkin~\cite{SahVis} & $O(n + \frac{nk^{O(1)}}{m})$ \\
Cole and Hariharan~\cite{ColHar} & $O(n + \frac{nk^4}{m})$ \\
Amir, Lewenstein, and Porat~\cite{Ami} & $O(n\sqrt{k\log k})$ \\
Amir, Lewenstein, and Porat~\cite{Ami} & $O(n\log k + \frac{nk^3\log k}{m})$\\
Clifford, Fontaine, Porat,
Sach, and Starikovskaya~\cite{Cli}
& $O(n\polylog m+ \frac{nk^2\log k}{m})$\\
Gawrychowski and Uznański~\cite{GawUzn} &
$O(n\log^2m\log\sigma + \frac{nk\sqrt{\log n}}{\sqrt{m}})$\\\bottomrule
\end{tabular}
\caption{Time bounds of known exact algorithms for computing all distances at most $k$.}\label{tbl}
\end{table}

As a function of $n$, the time bound is $\tO(n^{3/2})$ in the worst case, when $m$ and $k$ are linear in~$n$.  To obtain faster algorithms, researchers have turned to the \emph{approximate} version of the problem: finding values that are within a $1+\eps$ factor of the true distances.

Several efficient randomized (Monte-Carlo) algorithms for approximating all Hamming distances have been proposed.  There are three main simple approaches:
\begin{itemize}
\item  Karloff (Inf.\ Proc.\ Lett., 1993)~\cite{Kar} obtained an $O(\eps^{-2} n\log n\log m)$-time algorithm, by randomly mapping the alphabet to $\{0,1\}$, thereby reducing the problem to
$O(\eps^{-2}\log n)$ instances with $\sigma=2$.  Each such instance can be solved in $O(n\log m)$ time by standard convolution, i.e., fast Fourier transform (FFT)\@.
Karloff's approach can be derandomized (in $O(\eps^{-2} n\log^3m)$ time, via $\eps$-biased sample spaces or error-correcting codes).
\item
Indyk (FOCS 1998)~\cite{Ind} solved the approximate decision problem
for a fixed threshold 
in $O(\eps^{-3} n\log n)$ time, by
using random sampling 
and performing $O(\eps^{-3}\log n)$ convolutions in $\mathbb{F}_2$, each
doable in $O(n)$ time by a bit-packed version of FFT\@.
The general problem can then be solved by examining logarithmically many thresholds, in $O(\eps^{-3} n\log n\log m)$ time.
\item
Kopelowitz and Porat (SOSA 2018)~\cite{KopPor} obtained an
$O(\eps^{-1}n\log n\log m)$-time algorithm, by randomly mapping
the alphabet to $[O(\eps^{-1})]$,\footnote{Throughout the paper, let $[x]=\{0,1,\ldots,x-1\}$ for a positive integer $x$.}
thereby reducing the problem to 
$O(\log n)$ instances with $\sigma=O(\eps^{-1})$.  Each such instance can be solved
by $O(\eps^{-1})$ convolutions.   This result is notable for its better
$\eps$-dependence; previously, Kopelowitz and Porat (FOCS 2015)~\cite{KopPorFOCS15} gave a more complicated
algorithm~\cite{KopPorFOCS15} with $O(\eps^{-1}n\log n\log m\log\sigma\log(1/\eps))$ randomized running time (which also uses FFT).
\end{itemize}

\medskip
All three algorithms require $O(n\log^2n)$ time as a function of $n$, and they all use FFT\@.  Two natural questions arise: (i) can the $n\log^2n$ barrier be broken? (ii) is FFT necessary for obtaining nearly linear time algorithms?

\subsection{A New Simple Approximation Algorithm}
In \cref{sec:generic,sec:offline}, we present a  randomized approximation algorithm  which costs $O(\eps^{-2.5} n\log^{1.5}n)$ time and does not use FFT, thereby answering both questions.  
As in previous randomized algorithms, the algorithm is Monte-Carlo and its results are correct with high probability, i.e., the error probability is $O(n^{-c})$ for an arbitrarily large constant~$c$.

Our approach is based on random sampling (like Indyk's~\cite{Ind}): the Hamming distance is estimated by checking mismatches at a random subset of positions. 
In order to avoid FFT, our algorithm uses a random subset with more structure: the algorithm picks a random prime $p$ (of an appropriately chosen size) and a random offset $b$, and considers a subset of positions
$\{b,b+p,b+2p,\ldots\}$.
The structured nature of the subset enables more efficient computation.  
It turns out that even better efficiency is achieved
by using multiple (but still relatively few) offsets.

When approximating the Hamming distance of the pattern at subsequent text locations, the set of sampled positions in the text changes, and so a straightforward implementation seems too costly. 
To overcome this challenge, a key idea is to shift the sample a few times in the pattern and a few times in the text (namely,  for a tradeoff parameter~$z$, our algorithm considers $z$ shifts in the pattern and $p/z$ shifts in the text).

While these simple ideas individually may have appeared before in one form or another in the literature, we demonstrate that they are quite powerful when put together in the right way, and with the right choice of parameters---numerous new consequences follow, as we outline below.

\subsection{Consequences}

\paragraph{A linear-time approximation algorithm.}
By combining the basic new algorithm with existing (more complicated) techniques,
we show that the $O(\eps^{-2.5} n\log^{1.5}n)$ time bound can be further
reduced
all the way down to \emph{linear}! 
More precisely, the new (randomized) time bound is $O(\eps^{-2}n)$.  Linear-time algorithms were not 
known before, even for the approximate decision problem with a fixed threshold, and even in the binary case ($\sigma=2$).  
In fact, our final time bound is $O(\frac{n\log\sigma}{\log n} +
 \frac{n\log^2\log n}{\eps^2 \log n})$, which is \emph{slightly sublinear} in $n$ when $\sigma$ is small ($\sigma = n^{o(1)}$).

As the reader may surmise,
bit-packing techniques are needed (we assume that the input strings are
given in $O(\frac{n\log\sigma}{\log n})$ words).  To ease the description, in \cref{sec:combo},
we first present a version with $O(\eps^{-2}n\log\log n)$ running time and no messier bit-packing tricks, before describing the final algorithm in \cref{app:lin}.

\paragraph{An improved exact algorithm.}
We apply our linear-time approximation algorithm to obtain a faster algorithm for the \emph{exact} $k$-mismatch problem 
(computing exactly all distances at most $k$).  The new time bound is $O(n + \min(\frac{nk}{\sqrt{m}}\sqrt{\log m},\, \frac{nk^2}{m}))$, which shaves off some logarithmic factors
from Gawrychowski and Uznański's result~\cite{GawUzn} (although to be fair, their result is deterministic).
In particular, the running time is linear when $k\le\sqrt{m}$.
Our description (see \cref{sec:exact}) does not rely on Gawrychowski and Uznański's  and is arguably simpler, using \emph{forward differences}~\cite{CKP19} to handle approximately periodic patterns.

\paragraph{Improved $\eps$-dependence.}
Apart from shaving off $\log n$ factors, our approach, combined with rectangular matrix multiplication (interestingly), leads to approximation algorithms with improved $\eps^{-O(1)}$ factors in the time cost. 
As mentioned, Kopelowitz and Porat~\cite{KopPorFOCS15,KopPor} previously obtained algorithms with a
factor of $\eps^{-1}$, which improve upon earlier methods with an $\eps^{-2}$ factor.
We are able to obtain even better $\eps$-dependence in many cases (see \cref{sec:mm}).
The precise time bound as a function of $m$, $n$, and $\eps$ is tedious to state (as it relies on current results on rectangular
matrix multiplication), but in the case when the pattern is sufficiently long, for example, when $m\ge\eps^{-28}$ (the exponent $28$ has not been optimized), the running time is actually $O(n\polylog n)$ \emph{without any $\eps^{-O(1)}$ factors}, surprisingly!

\paragraph{Sublinear-time algorithms.}
We also show that truly \emph{sublinear-time} (randomized) algorithms are possible for
the approximate decision problem (finding all locations with distances approximately less than $k$) when the threshold $k$ is not too small, the approximation factor is a constant, and the number $\OCC$ of occurrences to report is sublinear.  Such sublinear-time algorithms are
attractive from the perspective of big data, as not all of the input need to be read.  All we assume is that
the input pattern and text are stored in arrays supporting random access.  For example, for an approximation factor $1+\eps$,
we obtain a time bound of $\tO(n/k^{\Omega(\eps^{1/3}/\log^{2/3}(1/\eps))} +
\OCC\cdot k^{O(\eps^{1/3}/\log^{2/3}(1/\eps))})$, and
for an approximation factor near~2, we obtain a time bound of $\OO(n^{4/5} + n/k^{1/4} + \OCC)$.
Different tradeoffs are possible, as the bound relies on known results on approximate  nearest neighbor search in high dimensions (see \cref{sec:sublin}).
The $\OCC$ term disappears if we just want to decide existence or report one location.

In particular, we obtain a \emph{property tester} for pattern matching: with good probability, the test  
returns true if an exact match exists, and false if the pattern is 
\emph{$\delta$-far} from occurring the text, i.e., its
Hamming distance is more than $\delta m$ at every location.
The running time
is $\tO(\delta^{-1/3}n^{2/3} + \delta^{-1}\frac nm)$ (approximate
nearest neighbor search is not needed here, and the algorithm is simple).
We are not aware of such a property tester for pattern matching, despite the extensive literature on property testing and sublinear-time algorithms, and on the classical pattern matching problem.

We remark that some previous work has focused on sublinear-time algorithms with the added assumption that the input strings are generated from a known distribution~\cite{ChangL94,AndoniIKH13}.
By contrast, our results hold for \emph{worst-case} inputs.
Additional work considers sublinear-time algorithms for edit-distance problems~\cite{BatuEKMRRS03,Bar-YossefJKK04,Bar-YossefJKK_APPROX04}.
Nevertheless, some of these sublinear-time algorithms (particularly, by Andoni et al.~\cite{AndoniIKH13} and Batu et al.~\cite{BatuEKMRRS03}) share some rough similarities with our general approach.

\paragraph{Streaming approximation algorithms.}
Yet another setting where our approach leads to new results
is that of (one-pass) \emph{streaming} algorithms.
Characters from the text arrive in a stream one at a time,
and locations with Hamming distance at most $k$ 
need to be identified as soon as their last characters are read.
The goal is to develop algorithms that use limited (sublinear) space, and
also low processing time per character.  Such algorithms are
well-motivated from the perspective of big data.

A breakthrough paper by Porat and Porat~\cite{DBLP:conf/focs/PoratP09} provided a streaming algorithm
for exact pattern matching ($k=0$) working in $\Ohtilde(1)$ space and taking $\Ohtilde(1)$ time per text character
(Breslauer and Galil~\cite{BG:2014} subsequently improved the time cost to $\Oh(1)$).
Porat and Porat~\cite{DBLP:conf/focs/PoratP09} also introduced the first streaming algorithm for the exact $k$-mismatch problem 
using $\Ohtilde(k^2)$ time per character and $\Ohtilde(k^3)$ space. Subsequent improvements~\cite{Cli,GKP18}
culminated in an algorithm by Clifford et al.\ (SODA 2019)~\cite{CKP19} which solves the streaming exact $k$-mismatch problem in $\Ohtilde(\sqrt{k})$ time per character using $\Ohtilde(k)$ space (this space consumption is optimal regardless of the running time).

Streaming algorithms for the approximate $k$-mismatch problem have
also been considered~\cite{Cli}.
However, the only known result not subsumed by the above-mentioned
exact algorithm is by Clifford and Starikovskaya~\cite{DBLP:conf/icalp/CliffordS16}, who gave a streaming algorithm with
$\Ohtilde(\eps^{-5}\sqrt{m})$
space and $\Ohtilde(\eps^{-4})$ time per character, beating
the results for the exact case only when $k\gg\sqrt{m}$.


In \cref{sec:stream}, we describe a streaming algorithm for the approximate $k$-mismatch problem, which is based on our new simple approximation algorithm (\cref{sec:generic}),
with $\Ohtilde(\eps^{-2.5}\sqrt{k})$ space and $\Ohtilde(\eps^{-2.5})$ time per character.  
In \cref{sec:streamingB}, we introduce another sampling approach leading to an algorithm with $\Ohtilde(\eps^{-2}\sqrt{k})$ space and $\Ohtilde(\eps^{-3})$ time per character.  
Moreover, a thorough analysis of our algorithm shows that the space usage is always $\Ohtilde(\eps^{-1.5}\sqrt{m})$.
(Independently, Starikovskaya et al.~\cite{SSU19} apply a different approach to design a streaming algorithm using $\Ohtilde(\eps^{-2}\sqrt{m})$ space.)

\section{Preliminaries}\label{sec:preliminaries}

A string $S$ of length $|S|=s$ is a sequence of characters $S[0]S[1]\cdots S[s-1]$ over an alphabet $\Sigma$. In this work, we assume $\Sigma=[\sigma]$.
A \emph{substring} of $S$ is denoted by $S[i\dd j]=S[i]S[i+1]\cdots S[j]$ for $0\leq i \leq j < s$. If $i=0$, the substring is called a \emph{prefix} of $S$, and if $j=s-1$, the substring is called a \emph{suffix} of $S$.
For two strings $S$ and $S'$ of the same length $|S|=s=|S'|$, we denote by $\HAM(S,S')$ the Hamming distance of $S$ and~$S'$,
that is, $\HAM(S,S') = |\{i\in[s] : S[i] \ne S'[i]\}|$.
Let $\odot$ denote concatenation (in increasing order of the running index).


We begin with a precise statement of the problem in two variants.
We state the problem in a slightly more general form, where we are additionally given a set $Q$ of query locations.  (We may take $Q=[n-m+1]$ at the end
to reproduce the standard formulation.)

\defproblem{prob:apx}{Approximate Text-to-Pattern Hamming Distances}{A pattern $P\in \Sigma^m$, a text $T\in \Sigma^n$, a sorted set $Q\subseteq [n-m+1]$ of query locations, and an error parameter $\eps\in(0,\frac13]$.}{For every $i\in Q$, a value $\DD_i$ such that $(1-\eps)d_i \le \DD_i \le (1+\eps) d_i$,
where $d_i = \HAM(P, T[i\dd i+m-1])$ is the
Hamming distance between $P$  and $T[i\dd i+m-1]$.}

The decision version of the problem, approximately comparing each distance with a given threshold value, is formulated using the notion of an \emph{$(\eps,k)$-estimation}.
We say that $\hat x$ is an $(\eps,k)$-estimation of $x$ if the following holds:
\begin{itemize}
\item if $\tilde{x}\in [(1-\eps)k,2(1+\eps)k]$, then $(1-\eps) x \le \tilde{x} \le (1+\eps)x$;
\item if $\tilde{x} < (1-\eps) k$, then $x < k$;
\item if $\tilde{x} > 2(1+\eps)k$, then $x > 2k$.
\end{itemize}

\defproblem{prob:fixedk}{Approximate Text-to-Pattern Hamming Distances with a Fixed Threshold}{A pattern $P\in \Sigma^m$, a text $T\in \Sigma^n$, a sorted set $Q\subseteq [n-m+1]$ of query locations, a distance threshold $k$, and an error parameter $\eps\in(0,\frac13]$.}{For every $i\in Q$, a value $\DD_i$ that is an $(\eps,k)$-estimation of $d_i$.}

Notice that, for every $k$, a solution for \cref{prob:apx} is also a solution for \cref{prob:fixedk}.
Moreover, given solutions for \cref{prob:fixedk} for each $k$  up to $m$ that is a power of $2$,  \cref{prob:apx} is solved as follows: Let $\DD_i^{(k)}$ be an $(\eps,k)$-estimation of $d_i$.
For every $i\in Q$ and every $k$, if $\DD_i^{(k)}\in [(1-\eps)k,2(1+\eps)k]$ then $\DD_i^{(k)} \in (1\pm\eps) d_i$, and for $k= 2^{\floor{\log d_i}}$ the condition $\DD_i^{(k)}\in [(1-\eps)k,2(1+\eps)k]$ must hold.

%
%

\section{A Generic Sampling Algorithm}\label{sec:generic}
We first focus on \cref{prob:fixedk}. 
We introduce an integer parameter $s>0$ controlling the probability that the algorithm returns correct answers.
For each position $i$, define $M_i := \{j : P[j]\ne T[i+j]\}$ so that $d_i = |M_i|$.
Our algorithm estimates the size $d'_i$ of $M'_i := M_i \bmod p := \{j \bmod p : j \in M_i\}$
for an appropriately chosen integer $p$.
By the following result, if $p$ is a prime number picked uniformly at random from a certain range, then $(1-\eps)d_i\le  d'_i \le d_i$
holds with probability $1-\Oh(1/s)$.
Thus, a good estimation of $d'_i$ is also a good estimation for $d_i$.

\begin{lemma}\label{lem:X}
Let $p$ be a random prime in $[\hat p,2\hat p)$, where $\hat p=\eps^{-1} s k \log m$.
For every set $M\subseteq [m]$ of size $O(k)$, the probability that $|M \bmod p|< (1-\eps)|M|$ is
$O(1/s)$.
\end{lemma}
\begin{proof}
The number of triples $(i,j,p)$ such that $i,j\in M$, $i<j$, and $p\in [\hat p, 2\hat p)$ is a prime divisor
of $j-i$ is at most $O(|M|^2\log_{\hat p} m)$ (since any positive integer in $[m]$ has at most $\log_{\hat p} m$ prime divisors $p\ge \hat p$). If $|M \bmod p|< (1-\eps)|M|$, then the number of such triples
with a fixed prime $p$ is at least $\eps|M|$.
Thus, the number of primes $p\in [\hat p,2\hat p)$ with
$|M \bmod p|< (1-\eps)|M|$ is at most $O(\frac{|M|^2\log_{\hat p} m}{\eps |M|})=O(\eps^{-1} k\log m/\log \hat p)$.  The total number of primes in $[\hat p,2\hat p)$ is
$\Omega(\hat p/\log \hat p)=\Omega(\eps^{-1} s k\log m/\log \hat p)$. Hence, the probability of picking a ``bad'' prime is $O(1/s)$.
\end{proof}

\paragraph{Offset texts and patterns.}
The estimation of  $d'_i$ uses the concept of \emph{offset strings}.
Let $p$ be an integer.
For a string $S$ of length $m$ and an integer $r\in [p]$, we define the $r$th \emph{offset string} as 
\[\bigodot_{j\in [m]:\ j \bmod p \,=\, r  } S[j].\]
Notice that \[M'_i\ =\ \left\{r \in [p] :  \bigodot_{j\in [m]:\ j\bmod p \,=\, r   } P[j]\ \ne  \bigodot_{j\in [m]:\ (i+j) \bmod p \,=\, r   } T[i+j] \right\}.\]

\paragraph{Picking a random offset.}
Unfortunately, finding all occurrences of all offset patterns in all offset texts is too costly.
One way to efficiently estimate $d'_i$ is to randomly pick an offset.
Let $z$ be an integer parameter to be set later such that $1\le z\le p$ and let $b\in [p]$ be an arbitrary integer.
Write $(i\bmod p)$ as $u_i+v_i z$ with $u_i\in [z]$ and $v_i\in [\ceil{p/z}]$.

If the algorithm stores the offset patterns 
\[\bigodot_{j\in [m]:\ (j+u)\bmod p \,=\, b   } P[j]\]
for every $u\in [z]$, and the offset texts 
\[\bigodot_{j\in [m]:\ (i+j-vz)\bmod p \,=\, b   } T[i+j]\]
for every $v\in [\ceil{p/z}]$, then the algorithm has the information needed to test whether $(b-u_i)\bmod p \in M_i'$, i.e., whether 
\[\bigodot_{j\in [m]:\ (j+u_i)\bmod p \,=\, b   } P[j]\ = \bigodot_{j\in [m]:\ (i+j+u_i) \bmod p \,=\, b  } T[i+j]
\ = \bigodot_{j\in [m]:\ (i+j-v_i z) \bmod p \,=\, b  } T[i+j].\]
Moreover, if $b$ is chosen uniformly at random, then $(b-u_i) \bmod p$ is also uniformly random in $[p]$, and so $\Pr[(b-u_i)\bmod p \in M'_i] = \frac {d'_i}{p}$.

\paragraph{Picking multiple random offsets.}
Instead of picking one element $b$, our algorithm picks a random subset of elements $B\subseteq [p]$, with sampling rate $\beta=\frac{1}{2k}$.
(The expected size of $B$ is small, namely, $O(\beta p)=O(\eps^{-1}s\log m)$, if $p=\Theta(\eps^{-1}sk\log m)$ and $s$ is small.) 
For each $i$, let $E_i$ be the event that there exists some $b\in B$ such that $(b-u_i)\bmod p \in M'_i$.

\begin{lemma}\label{lem:main}
	$\Pr[E_i] = 1-(1-\beta)^{d'_i}$.
	\end{lemma}
	\begin{proof}
		Note that $E_i$ holds if and only if $B'_i \cap M'_i \ne \emptyset$,
		where $B'_i := \{(b- u_i)\bmod p : b\in B\}$.
	As $B'_i$ is a subset of $[p]$ with each element sampled independently with rate $\beta$,
	$\Pr[B'_i \cap M'_i = \emptyset] = (1-\beta)^{|M'_i|}$.
\end{proof}

Our algorithm tests for each location $i$ whether $E_i$ happens.
This is equivalent to testing if
\[\bigodot_{j\in [m]:\ (j+u_i)\bmod p \,\in\, B   } P[j]\ = \bigodot_{j\in [m]:\ (i+j-v_iz)\bmod p \,\in\, B  } T[i+j].\]

Finally, in order to extract an estimation of $d'_i$, the algorithm repeats the process with $L=\eps^{-2} \log s$ independent choices of $B$.
For each location $i\in Q$, the algorithm computes $c_i$ which is the overall number of times that the event $E_i$ took place throughout the $L$ executions.
Finally, the algorithm sets $\DD_i = \log_{1-\beta}(1-c_i / L)$ (so that
$c_i = (1-(1-\beta)^{\DD_i})\cdot L$).
The following pseudo-code summarizes the generic sampling algorithm,
whose correctness follows from \cref{lem:X,lem:main}, and a standard use of Chernoff bounds.

\begin{algorithm}[H]
	Pick a random prime $p\in [\hat p,2 \hat p)$\tcr*{$\hat{p} = \eps^{-1} s k \log m$}\label{ln:pick}
	$p = \min(p,m)$\;\label{ln:m}
	\ForEach(\tcr*[f]{$L = \Theta(\eps^{-2}\log s)$ with a sufficiently large constant factor}){$\ell\in [L]$}{
		Pick a random sample $B^{(\ell)}\subseteq [p]$ with sampling rate $\beta$\tcr*{$\beta = \frac{1}{2k}$}\label{ln:sample}
		\lForEach(\tcr*[f]{$1 \le z \le p$}){$u \in [z]$} {
			$\displaystyle X_u^{(\ell)}\: =\: \bigodot_{j\in [m]:\ (j+u) \bmod p\;\in\; B^{(\ell)}} P[j]$\label{ln:constructX}
		}
		\ForEach{$v \in [\ceil{p/z}]$}{
			\lForEach{$i \in [n-m+1]$} {
				$\displaystyle Y_v^{(\ell)}(i) \:=\: \bigodot_{j\in [m]:\ (i+j-vz) \bmod p \;\in\; B^{(\ell)} } T[i+j]$\label{ln:constructY}
			}
		}
	}
	\ForEach{$i \in Q$}{\label{ln:query-generic}
		Write $(i\bmod p)$ as $u_i+v_iz$ with $u_i\in [z]$ and $v_i\in [\ceil{p/z}]$\;
		Set $c_i=|\{\ell\in [L]: X_{u_i}^{(\ell)}\neq Y_{v_i}^{(\ell)}(i)\}|$ and $\DD_i = \log_{1-\beta}(1-c_i / L)$\;\label{ln:output}
	}
	\caption{Generic-Algorithm($T,P,Q,k,\eps,s$)}\label{alg:main}
\end{algorithm}

\begin{restatable}{lemma}{lemgeneric}\label{lem:generic_algo_correct}
For every $i\in Q$, the value $\DD_i$ computed by \cref{alg:main} is an $(\eps,k)$-estimation of $d_i$ with probability $1-\Oh(1/s)$. 
\end{restatable}
\begin{proof}
Let $E_i^{(\ell)}$ be the event that there exists some $b\in B^{(\ell)}$ such that $(b-u_i)\bmod p \in M'_i$.
By \cref{lem:main}, we have $\Pr[E_i^{(\ell)}] = 1-(1-\beta)^{d'_i}$ and the events $E_i^{(\ell)}$ are independent across $\ell \in [L]$.
Let $\teps = \Theta(\eps)$ with a sufficiently small constant factor.
The symmetric multiplicative Chernoff bound therefore yields
\[\Pr \left[c_i / L \in (1\pm \teps)\left(1-(1-\beta)^{d'_i}\right)\right] \,=\, 1 - \exp\left(-\Omega\left(\teps^2 L\left(1-(1-\beta)^{d'_i}\right)\right)\right).\]
We consider three cases.

\paragraph*{Case 1: $d_i \in [\frac12 k, 4k]$.}
By \cref{lem:X}, $(1-\teps) d_i \le |M_i \bmod p| \le d_i$ holds for any $\teps=\Theta(\eps)$ with probability $1-O(1/s)$ for the prime $p$ picked in \cref{ln:pick}, and obviously this is also true if $p$ is replaced with $m$ in \cref{ln:m}. The following argument is conditioned on that event.
In other words, we assume that $(1-\teps) d_i \le d'_i \le d_i$.
In particular, this yields  $\frac{1-\teps}{2}k \le d'_i \le 4k$. Since $\beta = \frac{1}{2k}$ for $k\ge 1$,
we have $1-\beta = \exp(-\Theta(\frac{1}{k}))$, and thus  $(1-\beta)^{d'_i} = \exp(-\Theta(1))$.

The Chernoff bound therefore yields that
\[c_i / L\,\in\, (1\pm \teps)\left(1-(1-\beta)^{d'_i}\right)\,=\,1-(1-\beta)^{d'_i}(1\pm O(\teps))\]
holds with probability \[1-\exp(-\Omega(\teps^2 L(1-(1-\beta)^{d'_i})))\,=\,1-\exp(-\Omega(\teps^2 L))\,=\,1-\tfrac{1}{s}\]
provided that the constant factor at $L=\Theta(\teps^{-2}\log s)=\Theta(\eps^{-2}\log s)$ is sufficiently large.
Hence,
\[
    \DD_i = \log_{1-\beta}((1-\beta)^{d'_i}(1\pm O(\teps)))=d'_i+\log_{1-\beta}(1\pm O(\teps))=d'_i\pm O(\teps k)=d_i\pm O(\teps k)=d_i(1\pm O(\teps))
\]
holds with probability $1-\Oh(1/s)$. This remains true even if we account for the fact that $(1-\teps) d_i \le d'_i \le d_i$ may fail to be satisfied with probability $O(1/s)$. If the constant factor at $\teps = \Theta(\eps)$ is sufficiently small, we conclude that
$(1-\eps) d_i \le \DD_i \le (1+\eps)d_i$ holds in this case with probability $1-\Oh(1/s)$.
In particular, $d_i < k$ if $\DD_i < (1-\eps)k$ and $d_i > 2k$ if $\DD_i > 2(1+\eps)k$, so $\DD_i$ is an $(\eps,k)$-estimation of $d_i$.

\paragraph*{Case 2: $d_i < \frac12 k$.}
In this case, $d'_i < \frac12 k$ and thus $(1-\beta)^{d'_i} = \exp(-O(1))$.
The Chernoff bound therefore yields that
\[c_i / L\,\le\, (1+\teps)\left(1-(1-\beta)^{k/2}\right)\,=\,1-(1-\beta)^{k/2}(1- O(\teps))\] holds with probability
\[1-\exp(-\Omega(\teps^2 L(1-(1-\beta)^{k/2})))\,=\,1-\exp(-\Omega(\eps^2 L))\,=\,1-\tfrac{1}{s}\]
provided that the constant factor at $L=\Theta(\teps^{-2}\log s)=\Theta(\eps^{-2}\log s)$ is sufficiently large.
Hence,
\[   \DD_i \,\le\, \log_{1-\beta}((1-\beta)^{k/2}(1- O(\teps)))\,=\,k/2+\log_{1-\beta}(1- O(\teps))\,=\,k/2+ O(\teps k)\,=\,k/2(1+ O(\teps)) \] holds with probability
$1-\Oh(1/s)$.
If the constant factor at $\teps = \Theta(\eps)$ is sufficiently small, we conclude that
$\DD_i \le (1+\eps)\frac12k$ holds in this case with probability $1-\Oh(1/s)$.
Since $\eps \le \frac13$, this implies $\DD_i < (1-\eps) k$ and hence $\DD_i$ is an $(\eps,k)$-estimation of $d_i$.

\paragraph*{Case 3: $d_i > 4k$.}
\cref{lem:X} applied to any fixed subset of $M_i$ of size $4k$ implies that $d'_i > (1-\teps)4k$ holds with probability $1-O(1/s)$.
The following argument is conditioned on that event.
The Chernoff bound therefore yields that
\[c_i / L\,\ge\, (1-\teps)\left(1-(1-\beta)^{4(1-\teps)k}\right)\,=\,1-(1-\beta)^{4(1-\teps)k}(1+ O(\teps))\] holds with probability
\[1-\exp(-\Omega(\teps^2 L(1-(1-\beta)^{4(1-\teps)k})))\,=\,1-\exp(-\Omega(\teps^2 L))\,=\,1-\tfrac{1}{s}\]
provided that the constant factor at $L=\Theta(\teps^{-2}\log s)=\Theta(\eps^{-2}\log s)$ is sufficiently large.
Hence,
\[
    \DD_i \ge \log_{1-\beta}((1-\beta)^{4(1-\teps)k}(1+ O(\teps)))=4(1-\teps)k+\log_{1-\beta}(1+ O(\teps))=4(1-\teps)k- O(\teps k)=4k(1-O(\teps))
\]
holds with probability $1-\Oh(1/s)$. This remains true even if we account for the fact that $(1-\teps) d_i \le d'_i \le d_i$ may fail to be satisfied with probability $O(1/s)$. If the constant factor at $\teps = \Theta(\eps)$ is sufficiently small, we conclude that
$\DD_i \ge 4(1-\eps)k$ holds in this case with probability $1-\Oh(1/s)$. Since $\eps \le \frac13$, this implies  $\DD_i > 2(1+\eps) k$ and hence $\DD_i$ is an $(\eps,k)$-estimation of $d_i$.
\end{proof}

\section{A Simple $O(n\log^{1.5}n)$-Time Implementation}\label{sec:offline}

We now describe a simple implementation of  \cref{alg:main}, with $O_\eps(n\log^{1.5}n)$ running time.
Our algorithm uses a standard family of Karp--Rabin-style
fingerprint functions~\cite{KarRab,MotRag}, which is summarized in the following lemma.

\begin{lemma}\label{lem:fingerprint}
\emph{(Fingerprint functions)}
Given a prime number $q\ge\sigma$, define $\FF_q=\{F_{x,q}:x\in [q]\}$ where
$F_{x,q}:\Sigma^*\rightarrow[q]$ is the function
$F_{x,q}(S)=(\sum_{i=0}^{|S|-1} S[i]x^i)\bmod q$.
For a random $F\in\FF_q$ and fixed length-$m$ strings $X$ and $Y$ with $X\neq Y$, we have $\Pr[F(X)=F(Y)]\le \frac{m}{q}$.
\end{lemma}
\begin{proof}
We have
$F_{x,q}(X)=F_{x,q}(Y)$ if and only if $x$ is one of the at most $m$ roots
of the polynomial
$\sum_{i=0}^{m-1} (X[i]-Y[i])x^i$ modulo $q$.
\end{proof}
Our algorithm also applies
the following known family of hash functions mapping $[u]$ to $\{0,1\}$:

\begin{lemma}\label{lem:01}
\emph{(Strong universal hash functions into $\{0,1\}$)}
Define $\HH_u=\{h_{x,u} : x\in [2^{\lceil\log u\rceil}]\}$,
 where $h_{x,u}:[u]\rightarrow\{0,1\}$ is the function with
$h_{x,u}(a)=\bigoplus_{i=0}^{\ell-1}a_ix_i$, where $\oplus$ denotes exclusive-or, $\ell=\lceil\log u\rceil$, and $a_{\ell-1}\cdots a_0$ and $x_{\ell-1}\cdots x_0$ are the
binary representations of $a$ and $x$.
For a random function $h\in \HH_u$ and
fixed numbers $a,b\in [u]$ with $a\neq b$,
we have $\Pr[h(a)=h(b)]=\tfrac12$.
\end{lemma}
\begin{proof}
Suppose that $a$ and $b$ have binary representations $a_{\ell-1}\cdots a_0$ and $b_{\ell-1}\cdots b_0$.  For a random $x\in[2^\ell]$ with binary representation $x_{\ell-1}\cdots x_0$, we have
$h_{x,u}(a)=h_{x,u}(b)$ if and only if $\bigoplus_{k:\ a_{k}\neq b_{k}} x_{k}=0$, which holds with probability exactly $\tfrac12$.
\end{proof}

The following theorem gives a solution to
\cref{prob:fixedk}.
What is notable about the time bound  below is that the first two terms are \emph{sublinear} in many cases: when the threshold $k$ is not too small (and when we choose a small $s$), the algorithm only needs to read a sublinear number of symbols from the text and pattern.

\begin{theorem}\label{thm:main}
For every $s= n^{O(1)}$, 
\cref{prob:fixedk} can be solved in time
\[O\left(\sqrt{\tfrac{s n m\log m}{\eps^5 k}}\log s \,+\, \tfrac{n \log s}{\eps^2 k} \,+\, \tfrac{1}{\eps^2}|Q|\right)\]
using a randomized algorithm whose error probability for each fixed $i\in Q$ is $\Oh(1/s)$.
\end{theorem}

\begin{proof}
Our solution implements \cref{alg:main}.
Thus, by \cref{lem:generic_algo_correct}, the algorithm solves \cref{prob:fixedk} with the desired probability.
Notice that we do not need to recompute the strings $Y_v^{(\ell)}(i)$ for every location~$i$. This is because
the only changes that may need to be made to $Y_v^{(\ell)}(i)$ as $i$ increments are appending $T[i+m]$ to the end and dropping $T[i]$ from the beginning.
To support fast comparisons in \cref{ln:output}, the strings $X_u^{(\ell)}$ and $Y_v^{(\ell)}(i)$ are not stored explicitly, but rather are represented by fingerprints $F^{(\ell)}(X_u^{(\ell)})$ and $F^{(\ell)}(Y_v^{(\ell)}(i))$ for a random function $F^{(\ell)} \in \FF_q$,
where $q$ is some large enough prime.
Thus, the comparisons in \cref{ln:output} are implemented in $O(1)$ time per comparison, and with appropriate $q=n^{O(1)}$, the comparisons are correct with probability $1-\Oh(1/s)$.

\paragraph{Analysis of running time.}
By a Chernoff bound, the total size of sets $B^{(\ell)}$ is $\Theta(\beta L p)$ with probability $1-\exp(\Omega(\beta L p))=1-\Oh(1/s)$
provided that the constant factor at $L=\Theta(\teps^{-2}\log s)=\Theta(\eps^{-2}\log s)$ is sufficiently large.
The analysis below is conditioned on that event.
Thus, \cref{ln:sample} takes $O(\beta Lp)=O(\beta Lm)$ time in total if an efficient sampling algorithm is used~\cite{DBLP:conf/icalp/BringmannF13,DBLP:journals/algorithmica/BringmannP17}.

\cref{ln:constructX} requires computing the fingerprint of a string of length $O(\tfrac{m}{p}|B^{(\ell)}|)$
and costs $O(\tfrac{m}{p} |B^{(\ell)}|)$ time per $u$ and $\ell$.
The total cost over all $u\in [z]$ and all $\ell\in [L]$ is
$O(\beta m z L)$ time.
\cref{ln:constructY}  requires computing the fingerprints of sliding windows
over a string of length $O(\tfrac{n}{p} |B^{(\ell)}|)$ and takes $O(\tfrac{n}{p} |B^{(\ell)}|)$ time per $v$ and $\ell$ 
(since the fingerprint of each window can be computed in constant time from the fingerprint of the previous window).
The total cost over all $v\in [\ceil{p/z}]$ and all $\ell\in [L]$ is
$O(\tfrac{n}{p}\cdot \beta L p \cdot \tfrac{p}{z})=O(\beta n \cdot (p/z) \cdot L)$ time.

The total time bound so far is
$O(\beta\cdot (mz + np/z)\cdot L)$.
Setting $z=\min(\lfloor\sqrt{np/m}\rfloor,\, p)$ gives
\[O\left(\beta\cdot\left(\sqrt{nmp}+n\right)\cdot L\right)
=O\left(\tfrac1k\left(\sqrt{nm\eps^{-1}sk \log m}+n\right)\tfrac{\log s}{\eps^2}\right)= O\left(\sqrt{\tfrac{snm \log m}{\eps^5 k}}\log s + \tfrac{n\log s}{\eps^2 k}\right).
\]

In Lines~\ref{ln:query-generic}--\ref{ln:output}, the algorithm examines the indices $i\in Q$ in increasing order.  At any time, the algorithm maintains a pointer to a previous value of $F^{(\ell)}(Y_v^{(\ell)}(i))$ for each $v$ and $\ell$.
As the algorithm examines the next $i\in Q$, it advances $L$ pointers to obtain the current values of $F^{(\ell)}(Y_v^{(\ell)}(i))$ for all $\ell$.
The total cost for advancing pointers is $O(\beta n\cdot (p/z)\cdot L)$, which is already accounted for.
In addition, in \cref{ln:output} the algorithm spends $O(L)$ time per $i\in Q$,
for a total of $O(|Q| L)=O(\eps^{-2} |Q|\log s)$ time.

\paragraph{Speed-up by bit packing.}
We describe a simple improvement to reduce the running time of Lines~\ref{ln:query-generic}--\ref{ln:output} from $O(\eps^{-2}|Q|\log s)$ to $O(\eps^{-2} |Q|)$.  We work in the word RAM model with $w$-bit words, where $w=\delta\log n$ for a sufficiently
small constant $\delta$.

First, we change the fingerprint functions.
At each iteration $\ell$, the algorithm additionally picks a random hash function $h^{(\ell)}\in \HH_M$ and replaces $F^{(\ell)}$ with $h^{(\ell)}\circ F^{(\ell)}$.  Note that $h^{(\ell)}$ can be evaluated in $O(1)$ word operations.
Let $x^{(\ell)}_u = h^{(\ell)}(F^{(\ell)}(X_u^{(\ell)}))$ and $y^{(\ell)}_v(i) = h^{(\ell)}(F^{(\ell)}(Y_v^{(\ell)}(i)))$.
By \cref{lem:01}, for each $i\in Q$, $\Pr[x^{(\ell)}_{u_i} = y^{(\ell)}_{v_i}(i)] = \tfrac12 \Pr[E^{(\ell)}_i]$.
The algorithm doubles $c_i$ to compensate.

For each $u\in [z]$, the algorithm stores $\vec{x}_u=\langle x_u^{(\ell)} :\ell\in [L]\rangle$ as a bit vector packed in $O(\ceil{L/w})$ words.
As $i$ increases, the algorithm maintains the current $\vec{y}_v(i)=\langle y_v^{(\ell)}(i) :\ell\in [L]\rangle$ stored as a bit vector packed in $O(\ceil{L/w})$ words for every $v\in [\ceil{p/z}]$.
The update cost is proportional to
the number of changes to $\vec{y}_v(i)$ as $i$ increases in Lines~\ref{ln:query-generic}--\ref{ln:output}.
The number of such changes is $O(\beta n\cdot (p/z)\cdot L)$.
Note that the algorithm  can pre-sort the indices $i$ at which the changes occur,
for example, by a 2-pass radix sort with an $O(\sqrt{n})$-time overhead.
\cref{ln:output} can then be executed by looking up
the bit vectors $\vec{x}_{u_i}$ and $\vec{y}_{v_i}(i)$
and applying $O(\ceil{L/w})=O(\eps^{-2})$ word operations per $i\in Q$.
The total time cost is $O(\eps^{-2}|Q|)$.

We have assumed the following word operations are available: (i)~bitwise-xor and (ii)~counting the number of 1-bits in a word.  If these operations are not directly supported, they can still be implemented in constant time by lookup in a table of size
$2^w=n^\delta$.
\end{proof}


As an immediate consequence, we get the following worst-case time bound, which already improves the previous $O_\eps (n\log^2n)$ bound as a function of $n$.

\begin{corollary}\label{cor:main}
There is a randomized algorithm solving \cref{prob:apx} in $O(\eps^{-2.5}n\log^{1.5}n)$ time, returning answers correct with high probability. 
\end{corollary}
\begin{proof}
We run the algorithm for a sufficiently large constant~$s$ (in this application, the simpler version without bit packing suffices), and
repeat $O(c\log n)$ times (taking the median of the
answers for each $i\in Q$) to lower the error probability per $i\in Q$ to
$O(n^{-c-1})$.  This solves \cref{prob:fixedk} in time
\[O\left(\sqrt{\tfrac{nm \log m}{\eps^5 k}}\log n + \tfrac{n \log n}{\eps^{2} k} + \tfrac{|Q|\log n}{\eps^{2}}\right).\]
Notice that the algorithm developed in \cref{thm:main} supports processing locations $i\in Q$ online (as long as they are provided in the increasing order).
Hence, we run $\Oh(\log m)$ instances of this algorithm in parallel, one for each power of two $k \le m$.

For each $i\in Q$, the algorithm performs a binary search over the $O(\log m)$ powers of two, which results in forwarding $i$ to $O(\log \log m)$ 
out of the $O(\log m)$ instances of the algorithm of \cref{thm:main}.

The overall running time is therefore
\[ O\Bigg(\sum_{k} \sqrt{\tfrac{nm \log m}{\eps^5 k}}\log n + \sum_{k} \tfrac{n \log n}{\eps^{2} k} + \tfrac{|Q|\log n \log \log m}{\eps^{2}}\Bigg).\]
Since the first two terms are geometric progressions, the time cost becomes  
\[O\Big(\sqrt{\tfrac{nm \log m}{\eps^5}}\log n +\tfrac{n \log n}{\eps^{2}} +\tfrac{ |Q|\log n \log \log m}{\eps^{2}}\Big) = O(\eps^{-2.5} n \log^{1.5} n).\qedhere\] 
\end{proof}

\section{Further Consequences: An Overview}\label{sec:sketch}

Our approach leads to many further consequences, in many cases, by careful re-implementations of our generic algorithm.  We give a rough overview in this section, and defer detailed proofs to
subsequent sections.

\paragraph{Towards a linear-time approximation algorithm.}
We first note that the $O_\eps(n\log^{1.5}n)$ upper bound in \cref{cor:main}
is an overestimate when $k$ is large: from the proof, we see that
the total running time is actually at most
\[O_\eps\left(\sqrt{\tfrac{nm \log m}{k}}\log n + \tfrac{n\log n}{k} + |Q|\log n\log\log m\right)
\:=\: O_\eps\left(\tfrac{n \log^{1.5}n }{\sqrt{k}}+ n\log n\log\log m\right).\]
On the other hand, when $k$ is small, e.g., $k\le \log n$, we can switch to a known
exact algorithm, e.g., with $O(n\sqrt{k\log k})$ running time~\cite{Ami} (although this requires FFT\@).
The minimum of the two already yields an improved time bound of $O(n\log n\log\log n)$ 
for \cref{prob:apx}.

To do still better, we combine three algorithms:

\begin{itemize}
\item
Case I: $m$ is small, e.g., $m\le \log^{O(1)}n$.  In this case, \cref{prob:fixedk} can be solved in linear time by a simplification of our algorithm, as we show in \cref{sec:small} (see \cref{thm:small}).
\item
Case II: $k$ is small, e.g., $k\le m^\delta$ for some constant $\delta$.  In this case, we can
switch to a known exact algorithm, e.g., one by Cole and Hariharan~\cite{ColHar}, with running time of $O(n + \frac{nk^4}{m})$, which is linear for $\delta<\tfrac14$. 
Having been designed primarily for pattern matching with respect to edit distance, Cole and Hariharan's algorithm is quite complicated and inefficient (in terms of the polynomial dependence on $k$).  To be more self-contained, we describe an exact algorithm in \cref{sec:exact} (see \cref{thm:exact}), which actually has a better running time of $O(n + \frac{nk^2}{m})$.  (This does not require FFT.)
\item
Case III: $k>m^\delta$ and $m>\log^{\omega(1)}n$.
Here, we go back to our algorithm in \cref{sec:offline}, but with $s=n^{\delta/2}$, to solve \cref{prob:fixedk}.  The running time is
$O_\eps\left(\sqrt{\tfrac{s nm \log m}{k}}\log s + \tfrac{n \log s}{k} + |Q|\right)
=O_\eps( \sqrt{snm^{1-\delta}}\log^{1.5}n + \tfrac{n\log n}{m^\delta} + n)=O_\eps(n)$.
The error probability $O(n^{-\delta/2})$ can be lowered by a constant number of repetitions.
\end{itemize}
In all cases, we thus obtain a linear-time approximation algorithm for \cref{prob:fixedk}.  The algorithm can be modified to solve
\cref{prob:apx}, though the running time increases
to $O_\eps(n\log\log n)$ (see \cref{cor:combo}).  To remove the $\log\log n$ factor,
we additionally use bit-packing tricks to reimplement the algorithms in all three cases.
This, in fact, leads to a slightly sublinear time bound of
$O(\frac{n\log\sigma}{\log n} + \frac{n\log^2\log n}{\eps^2 \log n})$;
the details are more complicated and are deferred to \cref{app:lin}.

\paragraph{Improved $\eps$-dependence, via rectangular matrix multiplication.}
By a different implementation, it is possible to obtain $O(n\polylog n)$ running time without any $\eps^{-O(1)}$ factor when the pattern is long enough, namely, when $m\ge \eps^{-c}$ for some sufficiently large constant $c$.  First, we may assume that $k\ge\sqrt{m}\ge \eps^{-c/2}$, for otherwise we can switch to an exact $\tO(n+\frac{nk^2}{m})$-time algorithm.

Our algorithm in \cref{sec:offline}, with $s=O(1)$, has running time
$\tO\left(\sqrt{\tfrac{m n}{\eps^5 k}} + \tfrac{n}{\eps^2 k}
+ \tfrac{1}{\eps^2}|Q|\right)$.  Notice that the $\eps^{-O(1)}$
factors in the first two terms disappear when $k$ is large.  The third term 
comes from Lines~\ref{ln:query-generic}--\ref{ln:output}, i.e., the computation of the counts $c_i$, 
which amounts to the computation of inner products between vectors $\vec{x}_i$ and $\vec{y}_i(i)$.  The vectors have dimension $L=O(\eps^{-2})$.  There are $O(z)$ different vectors $\vec{x}_i$, and it is not difficult to show that there are $O(\frac{n}{z} + \frac{n}{\eps^2k})$ different vectors $\vec{y}_i(i)$ (in expectation).
Therefore, this step reduces to the multiplication of an $O(z)\times O(\eps^{-2})$
matrix and an $O(\eps^{-2})\times O(\frac{n}{z} + \frac{n}{\eps^2k})$ matrix.  For $k$ (and thus $m$) sufficiently large, and for an appropriate choice of $z$,
known rectangular matrix multiplication algorithms~\cite{DBLP:journals/siamcomp/Coppersmith82} take time
near linear in the number of output entries $\tO(z\cdot (\frac{n}{z} + \frac{n}{\eps^2k}))=\tO(n)$.  See \cref{sec:mm} for the details.

\paragraph{Sublinear-time algorithms, via approximate nearest neighbors.}
When $k$ is not too small ($n^{\Omega(1)}$) and the approximation factor is a constant, it is possible to obtain truly
sublinear-time algorithms for finding locations with Hamming distance approximately at most $k$ (assuming that
the number of occurrences to report is sublinear).

Recall that the algorithm in \cref{sec:offline} has running time
$\tO_\eps\left(\sqrt{\tfrac{sm n}{k}} + \tfrac{n}{k}
+ |Q|\right)$.  Notice that the first two terms are already sublinear when $k$ is large.  Again, the third term is the bottleneck, coming from Lines~\ref{ln:query-generic}--\ref{ln:output}, i.e., the computation of the counts $c_i$, which correspond to Hamming distances between vectors $\vec{x}_i$ and $\vec{y}_i(i)$.  We can no longer afford to loop through all indices $i$, but 
we just want to identify all $i$ for which $c_i$ is approximately less than some threshold value.  This step reduces to reporting close pairs between 
a set of $O(z)$ vectors and a set of $O(\frac{n}{z} + \frac{n}{\eps^2k})$ vectors.  This subproblem can be solved by using known techniques for (offline) approximate Hamming nearest neighbor search~\cite{AndoniLRW17,AlmanCW20}.

Two technical issues arise.  First, not all pairs of vectors should be matched (i.e., correspond to a valid index $i$).  However, we can identify which vectors $\vec{x}_i$ to match with each $\vec{y}_i(i)$, and these vectors form a contiguous subsequence of $\vec{x}_0,\ldots,\vec{x}_{z-1}$. 
Second, there will be false positives---$O(\frac{n}{s})$ of them in expectation, since
the error probability per position is $O(1/s)$.  However, we can still choose the parameter $s$ to keep all terms sublinear.  See \cref{sec:sublin} for the details.

\paragraph{Streaming algorithms, via multi-stream dictionary matching.}
In the streaming model, we re-implement our generic algorithm differently by treating each $Y^{(\ell)}_v$ as a stream.  Computing the count $c_i$ reduces to exact matching of the pattern $X^{(\ell)}_{u_i}$ in the stream $Y^{(\ell)}_{v_i}$ for each $\ell$.  To this end, we could use a known streaming algorithm for pattern matching.  However, because there are $O(z)$ possible $u_i$'s and $O(p/z)$ possible $v_i$'s, we actually need a streaming pattern matching algorithm that can handle multiple patterns and multiple text streams---luckily, this variant, known as multi-stream dictionary matching, has already been addressed in a recent paper by Golan et al.~\cite{GKP18}.  The space bound is $\tO_\eps(z + p/z)$, which becomes $\tO_\eps(\sqrt{p})=\tO_\eps(\sqrt{k})$ by setting $z=\sqrt{p}$,
and the per-character running time is $\Oh_\eps(1)$. See \cref{sec:stream} for more details and
\cref{sec:streamingB} for an alternative streaming algorithm with improved dependence on $\eps$ in the space consumption.

\section{Simplified Algorithm for Small $m$}\label{sec:small}

In this section, we note that our algorithm in \cref{sec:offline} becomes quite simple if $m\le \sqrt{n}$.  
As this case will be useful later, we provide a self-contained description
of the simplified algorithm below:

\begin{theorem}\label{thm:small}
For every $s=n^{O(1)}$, there is a randomized algorithm for \cref{prob:fixedk}
with running time $O(\eps^{-2}(m^2\log s + n))$,
where the error probability for each fixed $i\in Q$ is $\Oh(1/s)$.
\end{theorem}
\begin{proof}
Our solution is presented as \cref{alg:simple}. Compared to \cref{alg:main},
we set $p=m$ (the analysis involving primes becomes unnecessary\@!) and $z=m$ (so that the sample $B$ is considered with all $m$ shifts in the pattern
but with just one shift in the text). Furthermore, the Karp--Rabin fingerprints are removed, with strings directly hashed to $\{0,1\}$
using \cref{lem:01}.

\begin{algorithm}[H]
	\ForEach(\tcr*[f]{$L = \eps^{-2}\log s$}){$\ell\in [L]$}{
		Pick a random sample $B^{(\ell)}\subseteq [m]$ with sampling rate $\beta$\tcr*{$\beta = \frac{1}{2k}$}\label{ln:sample:simple}
		\lForEach{$i\in [m]$}{Pick a uniformly random function $h^{(\ell)}_i:[\sigma]\rightarrow\{0,1\}$}\label{ln:hash:simple}
		\lForEach{$i \in [m]$} {
			$\displaystyle x_i^{(\ell)}\: =\: \bigoplus_{j\in [m]:\ (i+j) \bmod m\;\in\; B^{(\ell)}} h^{(\ell)}_{(i+j)\bmod m}(P[j])$\label{ln:constructX:simple}%
			}
		\lForEach{$i \in [n-m+1]$}{
				$\displaystyle y_i^{(\ell)} \:=\: \bigoplus_{j\in [m]:\ (i+j) \bmod m \;\in\; B^{(\ell)} } h^{(\ell)}_{(i+j)\bmod m}(T[i+j])$\label{ln:constructY:simple}%
		}
	}
	\lForEach{$i \in Q$}{
		Set $c_i=|\{\ell\in [L]: x_{i \log m}^{(\ell)}\neq y_{i}^{(\ell)}\}|$ and $\DD_i = \log_{1-\beta}(1-2c_i / L)$
	}\label{ln:compare:simple}
	\caption{Simple-Algorithm($T,P,Q,k,\eps,s$)}\label{alg:simple}
\end{algorithm}

\paragraph{Analysis of error probability.}
Recall that $M_i = \{j \in [m]: P[j] \ne T[i+j]\}$ is of size $d_i$.
Define $E^{(\ell)}_{i}$ as the event that $(i+j)\bmod m \in B^{(\ell)}$ for some $j\in M_i$.
Observe that $E^{(\ell)}_{i}$ holds if and only if
\[\bigodot_{j\in [m]:\ (i+j) \bmod m\;\in\; B^{(\ell)}} P[j]\ \ne \bigodot_{j\in [m]:\ (i+j) \bmod m\;\in\; B^{(\ell)}} T[i+j].  \]
On the other hand, the construction of $B^{(\ell)}$ assures that $\Pr[E^{(\ell)}_{i}]=1-(1-\beta)^{d_i}$ analogously to \cref{lem:main}.
Moreover, by \cref{lem:01}, if $E^{(\ell)}_{i}$ holds then $\Pr[x_i^{(\ell)} = y_i^{(\ell)}]=\frac12$.
Otherwise, obviously $\Pr[x_i^{(\ell)} = y_i^{(\ell)}]=1$.
Hence, $\Pr[x_i^{(\ell)} \ne y_i^{(\ell)}] = \tfrac12 \Pr[E^{(\ell)}_{i}]= \tfrac12 \left(1-(1-\beta)^{d_i}\right)$.
Repeating the proof of \cref{lem:generic_algo_correct} (simplified accordingly due to $p=m$), we obtain the following result:
\begin{lemma}\label{lem:generic:simple}
	For every $i\in Q$, the value $\DD_i$ computed by \cref{alg:simple} is an $(\eps,k)$-estimation of $d_i$ with probability $1-\Oh(1/s)$. 
\end{lemma}

\paragraph{Analysis of running time.}
\cref{ln:sample:simple,ln:hash:simple} take $O(m\sigma)$ time per $\ell$, for a total of $O(m\sigma L)\le O(m^2 L)=O(\eps^{-2} m^2\log s )$ time.
\cref{ln:constructX:simple} takes $O(m)$ time per $i\in [m]$ and $\ell$, for a total of
$O(m^2L)=O(\eps^{-2} m^2\log s )$ time.

Implemented using a sliding window, \cref{ln:constructY:simple} takes $O(n)$ time per $\ell$, for a total of $O(nL)=O(\eps^{-2} n\log s )$.
\cref{ln:compare:simple} takes $O(L)$ time per $i\in Q$, for a total of $O(nL)$ as well.
Next, we use bit packing to speedup these steps.

For each $i\in [m]$, we store $\vec{x}_i=\langle x_i^{(\ell)}:\ell\in [L]\rangle$ as a bit vector packed in $O(\ceil{L/w})$ words.
For each $i\in [m]$ and $a\in [\sigma]$, we also
store  a bit vector  $\vec{h}_{i,a} = \langle h_{i,a,\ell}:\ell\in [L]\rangle$,
where $h_{i,a,\ell}=0$ if $i\notin B^{(\ell)}$ and $h_{i,a,\ell}=h^{(\ell)}_{i}(a)$ otherwise.
Then, to compute
the bit vector $\vec{y}_i=\langle y_i^{(\ell)}:\ell\in [L]\rangle$ in \cref{ln:constructY:simple},
we can take the bitwise exclusive-or of the vectors
$\vec{y}_{i-1}$, $\vec{h}_{(i-1)\bmod m\,,\,T[i-1]} $
and $\vec{h}_{(i+m-1)\bmod m\,,\, T[i+m-1]}$, in
$O(\ceil{L/w})=O(\eps^{-2})$ time per $i\in [n-m+1]$.
The total time is $O(\eps^{-2} n)$.
\cref{ln:compare:simple} also takes $O(\ceil{L/w})=O(\eps^{-2})$ time per $i\in Q$,
for a total of $O(\eps^{-2}|Q|)$ time.
\end{proof}

The following result is obtained
by combining \cref{thm:main} with the simpler (and slightly more efficient) approach for the case when $m \ll n$.

\begin{restatable}{theorem}{thmfilter}\label{thm:filter}
	For every constant $\delta > 0$, there is a randomized algorithm for \cref{prob:fixedk} with $k\ge \eps^{-1}m^{\delta}$ that runs in $\Oh(\eps^{-2}n)$ time and is correct with high probability.
\end{restatable}

\begin{proof}
If $m \le \log^{1/\delta} n$, we run the algorithm of \cref{thm:small} with $s=n^{\delta/2}$  and $Q=[n]$, which runs 
	in time
	\[\Oh(\eps^{-2}(m^2 \log n + n))=\Oh(\eps^{-2} n).\]
	Otherwise, we run the algorithm of \cref{thm:main} with $s=n^{\delta/2}$ and $Q=[n]$, which runs in time
	\begin{multline*} \Oh\left(\sqrt{\tfrac{snm\log m}{\eps^5 k}}\log s + \tfrac{n\log s}{\eps^2 k}+ \tfrac{n}{\eps^2}\right)
		  = \Oh\left(\eps^{-2}\sqrt{n^{1+\delta/2}m^{1-\delta}}\log^{3/2} n + \tfrac{n \log n}{\eps m^{\delta}}+ \tfrac{n}{\eps^2} \right) \\
		  =  \Oh\left(\eps^{-2}n^{1-\delta/4}\log^{3/2} n + \tfrac{n}{\eps}+ \tfrac{n}{\eps^2}\right)  = \Oh(\eps^{-2} n).
	\end{multline*}
	The whole algorithm is then repeated $\Oh(1)$ times to lower the error probability.
\end{proof}

\section{Exact Algorithms}\label{sec:exact}

In this section, we focus on the following problem.

\defproblem{prob:exact}{Exact Text-To-Pattern Hamming Distances with a Fixed Threshold}{A text $T\in \Sigma^n$, a pattern $P\in \Sigma^m$, and a distance threshold $k$.}{For each position $i\in [n-m+1]$, compute the exact value $d_i = \HAM(P,T[i\dd i+m-1])$
	or state that $d_i > k$.}

Our approach is to first use \cref{thm:filter} for $\eps=\frac13$ in order to distinguish between positions $i$ with $\DD_i > \frac43k$ (which can be ignored due to $d_i>k$) and positions with $\DD_i \le \frac43 k$ (in which case $d_i \le 2k$ will be computed exactly). If there are few positions with $\DD_i \le \frac43k$,
then for each of them the \emph{kangaroo method} (LCE queries)~\cite{DBLP:journals/jcss/LandauV88} is used to determine $d_i$ in $\Oh(k)$ time
after $\Oh(n)$-time preprocessing. Otherwise, we prove that both the pattern $P$ and the parts of the text $T$ containing any approximate occurrence of $P$
are approximately periodic, i.e., that there is a value $\rho=\Oh(k)$ which is their $\Oh(k)$-period according to the following definition:

\begin{definition}
	An integer $\rho$ is a $d$-period of a string $X$ if $\HAM(X[0\dd x-\rho-1], X[\rho \dd x-1])\le d$.
\end{definition}

We first focus on the version of \cref{prob:exact} where $P$ and $T$ both have approximate period $\rho$ (which is also given as input). 
This version is studied in \cref{sec:periodic}, where we prove the following result.

\begin{restatable*}{theorem}{thmperiodic}\label{thm:periodic}
	Given an integer $\rho=\Oh(d)$, which is a $d$-period of both $P$ and $T$,
	\cref{prob:exact} can be solved in $\Oh(n+d \min(d,\sqrt{n\log n}))$ time and $\Oh(n)$ space using a randomized algorithm that returns correct answers with high probability.
\end{restatable*}

Combining \cref{thm:periodic} with the kangaroo method, we obtain the following result for the general case in \cref{sec:general}.
\begin{restatable*}{theorem}{thmexact}\label{thm:exact}
    There exists a randomized algorithm for \cref{prob:exact} that uses $\Oh(n)$ space, costs $\Oh(n+\min(\tfrac{nk^2}{m}, \tfrac{nk\sqrt{\log m}}{\sqrt{m}}))$ time,
	and returns correct answers with high probability.
\end{restatable*}

\subsection{The Case of Approximately Periodic Strings}\label{sec:periodic}

We start by recalling a connection,
originating from a classic paper by Fischer and Paterson~\cite{FisPat},
between text-to-pattern Hamming distances and the notion of a convolution of integer functions.
Throughout, we only consider functions $f : \Z \to \Z$ with finite support $\supp(f)=\{x : f(x)\ne 0\}$, that is, the number of non-zero entries in $f$ is finite.
The \emph{convolution} of two functions $f$ and $g$ is a function $f\ast g$ such that \[[f\ast g](i)\,=\,\sum_{j\in\Z} f(j)\cdot g(i-j).\]

For a string $X$ and a character $a\in \Sigma$,  the \emph{characteristic function} $X_a : \Z \to \{0,1\}$ is defined so that $X_a(i)=1$ if and only if $X[i]=a$.
The \emph{cross-correlation} of strings $X$ and $Y$ is a function $X\otimes Y$ defined as follows, with the reverse of $Y$ denoted by $Y^R$:
\[X\otimes Y\,=\,\sum_{a\in\Sigma} X_a\ast Y_a^R.\]

\begin{lemma}[{\cite{FisPat},~\cite[Fact 7.1]{CKP19}}]\label{lem:crosscorrelationHam}
For every $i\in [n-m+1]$, we have \[\HAM(P, T[i\dd i+m-1])=m-[T\otimes P](i+m-1).\]
\end{lemma}

Recall that in our setting $P$ and $T$ have a $d$-period $\rho=\Oh(d)$. 
For a function $f$  and an integer $\rho$, the \emph{forward difference} of $f$ with respect to $\rho$ is a function $\Delta_\rho[f]$ defined as $\Delta_\rho[f](i)= f(i+\rho)-f(i)$.
\begin{observation}[{\cite[Observation 7.2]{CKP19}}]\label{obs:diff}
	If $\rho$ is a $d$-period of a string $X$, then the characteristic functions $(X_a)_{a\in \Sigma}$ satisfy $\sum_{a\in \Sigma}|\supp(\Delta_\rho[X_a])| \le 2(d+\rho)$.
\end{observation}

In order to compute $T \otimes P$, one could sum up the convolutions $T_a \ast P_a^R$. 
However, the characteristic functions of $T$ and $P$ have total support size $\Theta(n+m)$, while  
the total support size of the forward differences of $T$ and $P$ with respect to $\rho$ is only $\Oh(d+\rho)$. Hence, it would be more efficient
to sum up the convolutions $\Delta_\rho[T_a] \ast \Delta_\rho[P_a^R]$ instead. This yields the \emph{second forward difference}
of $T\otimes P$ with respect to~$\rho$.

\begin{lemma}[see {\cite[Fact 7.4]{CKP19}}]\label{lem:second-diff-conv}
For strings $X,Y$ and a positive integer $\rho$, we have \[\Delta_\rho[\Delta_\rho[X \otimes Y]] \,=\, \sum_{a\in \Sigma} \Delta_\rho[X_a]\ast\Delta_\rho[Y_a^R].\]
\end{lemma}
Note that the second forward difference $\Delta_\rho[\Delta_\rho[f]]$, denoted by $\Delta^2_\rho[f]$, satisfies $\Delta^2_\rho[f](i)=f(i+2\rho)-2f(i+\rho)+f(i)$. Consequently, $T\otimes P$ can be retrieved using the following formula:
\[[T\otimes P](i)\,=\,\Delta^2_\rho[T \otimes P](i+2\rho)+2[T\otimes P](i+\rho)-[T\otimes P](i+2\rho).\]
Since $\supp(T\otimes P)\sub [n+m-1]$, it suffices to process subsequent indices $i$ starting from $i=n+m-2$ down to $i=0$.
Therefore, when computing $T\otimes P$, the values of $[T\otimes P](i+\rho)$ and $[T\otimes P](i+2\rho)$ have already been computed in previous iterations, and so the focus is on designing a mechanism for evaluating the function $\Delta^2_\rho[T \otimes P] = \sum_{a\in\Sigma}\Delta_\rho[T_a]\ast\Delta_\rho[P_a^R]$.

\paragraph{The convolution summation problem.}
In order to design a mechanism for evaluating $\Delta^2_\rho[T \otimes P]$, we introduce a more general \emph{convolution summation} problem which is stated as follows.
The input is two sequences of functions $\mathcal F=(f_1,f_2,\dots,f_\num)$ and $\mathcal G=(g_1,g_2,\dots g_\num)$,
and the output is the function $\mathcal F \otimes \mathcal G$ such that $[\mathcal F\otimes\mathcal G](i)=\sum_{j=1}^\num(f_j\ast g_j)(i)$.

We define the \emph{support} of  a sequence of functions $\mathcal H$ as $\supp(\mathcal H)=\bigcup_{h\in\mathcal H}\supp(h)$.
The total number of non-zero entries across  $h\in \mathcal H$ is denoted by
$\|\mathcal H\|= \sum_{h\in\mathcal H}|h|$, where $|h|=|\supp(h)|$.

In our setting, we assume that the input functions are given in an efficient \emph{sparse representation} (e.g., a linked list that contains only the non-zero entries). Moreover, the output of the algorithm is restricted to the non-zero values of $\mathcal{F\otimes G}$.

\begin{lemma}\label{lem:offline-convolution}
	There exists a randomized algorithm that, given two sequences of functions 
	$\mathcal{F}=(f_1,\ldots,f_\num)$ and $\mathcal{G}=(g_1,\ldots,g_\num)$ with non-empty supports such that
 $\supp(\F)\sub [n]$ and $\supp(\G)\sub [n]$,
	computes $\mathcal{F\otimes G}$ (correctly with high probability) in $O(n)$ space and in time \[O\left(\sum_{j=1}^t  \min(|f_j|  |g_j|,\,n\log n)\right)\:=\: O\left(\min\left(\|\F\|\|\G\|,\,(\|\F\|+\|\G\|)\sqrt{n\log n}\right)\right).\]
\end{lemma}
\begin{proof}
There are two methods that the algorithm chooses from to compute each convolution $f_j \ast g_j$.
The first method is to enumerate all pairs consisting of a non-zero entry in $f_j$ and in $g_j$.
Using standard hashing techniques, the time cost of computing the convolution $f_j \ast g_j$ this way is $O(|f_j| |g_j|)$.
The second method of computing $f_j \ast g_j$ is by FFT, which costs $O(n\log n)$ time.
The algorithm combines both methods by comparing $|f_j| |g_j|$ to $n\log n$ for each $1\le j \le t$ and picking the cheaper method for each particular $j$. Thus, the time for computing $f_j \ast g_j$ for any $j$ is $O(\min (|f_j| |g_j|,\,n\log n))$.

In order to reduce the space usage, the algorithm constructs $\mathcal{F\otimes G}$
by iteratively computing
the sum $\sum_{j=1}^i (f_j\ast g_j)$.
In each iteration, the algorithm adds the function  $f_j \ast g_j$ to the previously stored sum of functions.
The summation is stored using a lookup table of size $O(\min (n,\sum_{j=1}^\num |f_j||g_j|))$ via standard hashing techniques (notice that the exact size of the lookup table is pre-calculated).
The cost of adding $f_j \ast g_j$ to the previous sum of functions is linear in $\supp(f_j\ast g_j)$ and thus bounded by the time cost of computing $f_j \ast g_j$.
Hence, the total running time of the algorithm is $O\left(\sum_{j=1}^t \min(|f_j| |g_j|,\,n\log n)\right)$.

For each $j$, we have $|f_j|\le \|\F\|$, and therefore
\[
	\sum_{j=1}^t \min(|f_j|  |g_j|,n\log n) \le \sum_{j=1}^\num |f_j| |g_j|
\:\le\:\sum_{j=1}^\num \|\F\|  |g_j|
\:=\:\|\F\|\sum_{j=1}^\num   |g_j|
\:=\:\|\F\| \|\G\|.
\]
The second bound is obtained by recalling that $\min(x,y)\le \sqrt{xy} \le x+y$ holds for every positive $x$ and $y$:
\[\sum_{j=1}^t \min(|f_j| |g_j|,n\log n)
\le \sum_{j=1}^t \sqrt{|f_j|  |g_j|n\log n}
\le \sum_{j=1}^t (|f_j|+|g_j|)\sqrt{n\log n}
=(\|\F\|+\|\G\|)\sqrt{n\log n}.\qedhere
\]\end{proof}

\paragraph{The algorithm.}
We are now ready to describe and analyze the algorithm for the case of approximately periodic strings.

\thmperiodic
\begin{proof}
First, the algorithm constructs the forward differences $\Delta_{\rho}[P_a^R]$ and $\Delta_{\rho}[T_a]$. This step costs $\Oh(n)$ time.
Let $\F = (\Delta_{\rho}[T_a])_{a \in \Sigma}$ and $\G = (\Delta_{\rho}[P_a^R])_{a\in\Sigma}$.
The algorithm uses \cref{lem:offline-convolution} to compute $\F\otimes\G$.
Due to \cref{obs:diff}, $\|\F\|, \|G\|= \Oh(d)$, so this computation costs $\Oh(d\min(d, \sqrt{n \log n}))$ time and,
by \cref{lem:second-diff-conv},
results in $\Delta^2_\rho[T\otimes P]$ (in a sparse representation). Finally, the algorithm retrieves $T\otimes P$ and computes the Hamming distances using \cref{lem:crosscorrelationHam}. This final step costs $\Oh(n)$ time. Overall, the running time is $\Oh(n+d\min(d,\sqrt{n\log n}))$, and the space usage is $\Oh(n)$.
\end{proof}

\subsection{General Case}\label{sec:general}
\thmexact

\begin{proof}
Without loss of generality, we may assume that $k\ge \sqrt{m}$; otherwise, the stated running time is $\Oh(n)$ anyway.
Moreover, we assume that $n \le \frac32 m$; otherwise, the text $T$ can be decomposed into parts of length at most $\frac32 m$
with overlaps of length $m-1$, and each part of the text can be processed separately; the overall running time does not change since the running time for each part is linear in the length of the part.

First, the algorithm uses \cref{thm:filter} with $\eps = \frac13$, which results in a sequence $\tilde{d}_i$ satisfying the following two properties
with high probability: if $\DD_i > \frac43k$, then $d_i > k$; if $\DD_i \le \frac43k$, then $d_i \le 2k$.

 Let $C = \{i \in [n-m+1]: \tilde{d}_i \le \frac43 k\}$. Observe that we may assume without loss of generality that $\min C = 0$ and $\max C = n-m$;
 otherwise, $T$ can be replaced with $T[\min C \dd \max C + m-1]$ and all indices $i$ with $d_i \le k$ are preserved (up to a shift by $\min C$).

 We consider two cases depending on whether or not $C$ contains two distinct positions at distance $\rho \le \frac12k$ from each other.
 If $C$ does not contain two such positions, then $|C|=\Oh(\frac{n}{k})$, and the algorithm spends $\Oh(d_i)=\Oh(k)$ time for each $i\in C$
 to compute $d_i$ using $1+d_i$ \emph{Longest Common Extension} (LCE) queries. After $\Oh(n+m)$-time preprocessing, these queries locate in $\Oh(1)$ time the leftmost mismatch between any substrings of $T$ or $P$; see~\cite{DBLP:journals/jcss/LandauV88,DBLP:journals/jacm/Farach-ColtonFM00,DBLP:journals/jacm/KarkkainenSB06}. In the context of approximate pattern matching, this technique is known as the \emph{kangaroo method}; see~\cite{Ami}.
 In this case, the overall running time is  $\Oh(n)$.

 It remains to consider the case where $C$ contains two distinct positions at distance $\rho\le \frac12k$ from each other.
 We claim that in this case $\rho$ must be an $\Oh(k)$-period of both $P$ and $T$, and so applying \cref{thm:periodic} with $d=\Oh(k)$
 results in the desired running time and linear space usage.

 Let the positions at distance $\rho$ be $i$ and $i'$ with $i < i'=i+\rho$. Due to $\HAM(P, T[i\dd i+m-1])\le 2k$ and $\HAM(P, T[i'\dd i'+m-1])\le 2k$,
 we conclude from the triangle inequality that:
%
 \begin{align*}
 \HAM(P[0\dd m-\rho-1],P[\rho\dd m-1]) & \le  \HAM(P[0\dd m-\rho-1], T[i'\dd i'+m-\rho-1])\\
 & \qquad \qquad \qquad \quad+\HAM(T[i+\rho\dd i+m-1],P)\\
 & \le 2k+2k=4k.
 \end{align*}
 Hence, $\rho$ is a $4k$-period of $P$. Furthermore, due to $\HAM(P, T[0\dd m-1])\le 2k$ (since $0\in C$), $\rho$ is an $8k$-period of $T[0\dd m-1]$.
 Similarly, $\rho$ is an $8k$-period of $T[n-m\dd n-1]$ (since $n-m\in C$). As $n \le \frac32 m \le 2m-\rho$,
 these two fragments of $T$ overlap by at least $\rho$ characters, which implies that $\rho$ is a $16k$-period of $T$.
 This completes the proof.
\end{proof}

Note that if one is interested in just an $O(n + \frac{nk^2}{m})$ upper
bound (which is sufficient for the application in the next section), then the algorithm does not need FFT (as the weaker $O(\|\F\|\|\G\|)$ upper bound in \cref{lem:offline-convolution} suffices).

\section{Combining Algorithms}\label{sec:combo}

In this section we return to approximation algorithms and design an almost linear time solution for \cref{prob:apx}, and a linear time solution for \cref{prob:fixedk}
by combining the three algorithms from \cref{sec:offline,sec:exact,sec:small}.

\begin{corollary}\label{cor:combo}
There exists a randomized algorithm for \cref{prob:fixedk} that runs in
$O(\eps^{-2} n)$ time and is correct with high probability.
Moreover, there exists a randomized algorithm for \cref{prob:apx} that runs
in $O(\eps^{-2} n\log\log n)$ time and is correct with high probability.
\end{corollary}
\begin{proof}
We consider three cases.
\begin{itemize}
\item
Case I: $m\le \log^{2} n$.  We run the algorithm of \cref{thm:small} to solve \cref{prob:fixedk} in
$O(\eps^{-2} n)$ time.  
We solve \cref{prob:apx}
by
examining all $k \le m$ that are powers of 2,
in $O(\eps^{-2}  n\log m)= O(\eps^{-2}  n\log\log n)$ time (see the discussion in \cref{sec:preliminaries}).

\item
Case II: distances $d_i \le \eps^{-1} \sqrt{m}$.  We run the exact algorithm of \cref{thm:exact},
which computes all such distances in $O(n+(\eps^{-1}\sqrt{m})^2\frac{n}{m})=\Oh(\eps^{-2}n)$ time.
\item

Case III:  distances $d_i > \eps^{-1} \sqrt{m}$ and $m > \log^{2}n$.  We run the
algorithm of \cref{thm:main} with $s=n^{0.25}$
to solve \cref{prob:fixedk}
in time \[O\left(\sqrt{\tfrac{s nm \log m}{\eps^5 k}}\log s + \tfrac{n \log s}{\eps^2 k} + \tfrac{|Q|}{\eps^2}\right)
= O\left(\eps^{-2}\sqrt{n^{1.25}m^{0.5}}\log^{1.5}n + \tfrac{n \log n}{\eps\sqrt{m}} + \tfrac{|Q|}{\eps^2}\right)=O(\eps^{-2}n).\]

We solve \cref{prob:apx} by examining all $k>\eps^{-1} \sqrt{m}$ that are powers of 2 (in parallel) and performing a binary search
for each $i\in Q$.  The total time is
\begin{multline*}
O\left(\sum_{\substack{k>\eps^{-1}\sqrt m \\ k \text{ is a power of } 2}} \sqrt{\tfrac{s nm \log m}{\eps^5 k}}\log s + \sum_{\substack{k>\eps^{-1}\sqrt m \\ k \text{ is a power of } 2}} \tfrac{n\log s}{\eps^2 k} + \tfrac{|Q|\log \log m}{\eps^2}\right) 
\\  = O\left(\eps^{-2}\sqrt{n^{1.25}m^{0.5}}\log^{1.5}n + \tfrac{n\log n}{\eps\sqrt{m}} + \tfrac{|Q|\log \log m}{\eps^2}\right)
  =  O(\eps^{-2} n \log \log m).
\end{multline*}
The algorithm is repeated $\Oh(1)$ times to lower the error probability.\qedhere
\end{itemize}
\end{proof}

In \cref{app:lin},
we describe further improvements to \cref{cor:combo}, reducing the running time to linear
(and even slightly sublinear), by using more complicated bit-packing tricks.

\section{Algorithms with Improved $\eps$-Dependence}\label{sec:mm}

In this section, we show that \cref{prob:apx} can be solved in $\Ohtilde(n)$ time without any $\eps^{-O(1)}$ factors when the pattern is sufficiently long, namely, when $m > \eps^{-27.22}$.
For this, we combine our generic sampling algorithm of \cref{sec:generic} with rectangular matrix multiplication~\cite{DBLP:journals/siamcomp/Coppersmith82,DBLP:conf/soda/GallU18}. Specifically, we show that if an $n \times n^{\alpha}$ matrix and an $n^{\alpha} \times n$ matrix can be multiplied in $\Ohtilde(n^2)$ time,
then \cref{prob:apx} can be solved in $\Ohtilde(n)$ time if $m  > \eps^{-\max\big(4+\tfrac{4}{\alpha}, 10\big)}$. In particular, with $\alpha > 0.17227$ due to Coppersmith~\cite{DBLP:journals/siamcomp/Coppersmith82},
the constraint reduces to $m >  \eps^{-27.22}$. Allowing $\OO(n)$ time rather than $\Ohtilde(n)$ time, we can use a more recent result by Le Gall and Urrutia~\cite{DBLP:conf/soda/GallU18} with $\alpha > 0.3138$, resulting 
in a looser constraint $m >  \eps^{-16.75}$.
We would like to remark, though, that in this version of the manuscript, these exponents 27.22 and 16.75
have not been optimized.

We start with a solution to \cref{prob:fixedk}.
\begin{theorem}\label{thm:fixedmm}
If $k > \eps^{-\max\big(2+\tfrac{2}{\alpha},5\big)}$ and $n > \eps^{-\max\big(\tfrac{4}{\alpha}, 6\big)}$, then \cref{prob:fixedk} can be solved in $\Ohtilde(n)$ time using a randomized algorithm returning correct answers with high probability.
\end{theorem}
\begin{proof}
We apply the approach of \cref{sec:generic} with $z = \min(\eps^2 k, \sqrt{n})$ and a sufficiently large $s = \Oh(1)$. As in the proof of \cref{thm:main},
we map the strings $X_{u}^{(\ell)}$ and $Y_{v}^{(\ell)}(i)$ to $x_u^{(\ell)},y_v^{(\ell)}(i)\in \{0,1\}$ using Karp--Rabin fingerprints composed with random hash functions.
Let $\vec{x}_u=\langle x_{u}^{(\ell)} : \ell \in [L]\rangle $ and $\vec{y}_v(i)=\langle y_{v}^{(\ell)}(i) : \ell \in [L]\rangle $ be the vectors defined in the proof of \cref{thm:main}; here, we do not pack these bit vectors, though.
Recall that the vectors $\vec{x}_u$ for $u\in [z]$ can be constructed in time $\Oh(\beta m z L)=\Oh(\frac{m z}{\eps^2 k})=\Oh(m)$.
Similarly, the vectors $\vec{y}_v(i)$ for $v\in [\ceil{p/z}]$ can be maintained (for subsequent $i\in [n-m+1]$) in the overall time $\Oh(\beta n L p / z)= \Ohtilde(\frac{n}{\eps^3 z})=\Ohtilde(\frac{n}{\eps^5 k}+\frac{\sqrt{n}}{\eps^3})$. Since $k > \eps^{-5}$ and $n >\eps^{-6}$, this time is $\Ohtilde(n)$.

It remains to implement Lines~\ref{ln:query-generic}--\ref{ln:output} of \cref{alg:main}. For each $i$, a naive implementation costs $\Ohtilde(\eps^{-2})$ time,
where the bottleneck is computing $c_i$, which is the inner product of $\vec{x}_{u_i}$ with $\vec{y}_{v_i}(i)$; the remaining operations cost $\Oh(1)$ time for each $i$.
We speed up these computations by arranging \emph{distinct} vectors $\vec{x}_{u_i}$ and  $\vec{y}_{v_i}(i)$ into two matrices and multiplying the two matrices.

The number of distinct vectors $\vec{x}_{u_i}$ is at most $z$. The analysis for vectors $\vec{y}_{v_i}(i)$ is more involved:
First, note that $v_i$ changes $\Oh(n/z)$ times as $i$ increases from $0$ to $n-m$. Secondly, observe that $\vec{y}_{v}(i)$ differs from $\vec{y}_{v}(i-1)$
at a given coordinate $\ell$ with probability $\Oh(\beta)=\Oh(1/k)$. Applying a union bound, $\Pr[\vec{y}_{v}(i)\ne \vec{y}_{v}(i-1)] = \Oh(\frac{1}{\eps^2 k})$. Hence, the expected number of distinct vectors $\vec{y}_{v}(i)$ is $\Oh(\frac{n}{z}+\frac{n}{\eps^2 k})$.
The algorithm declares a failure if this quantity exceeds the expectation by a large constant factor (the constant probability of this event adds up to the constant probability of the algorithm returning incorrect answers). 
Consequently, our task reduces to multiplying two matrices of dimensions $\Oh(z)\times \Oh(\eps^{-2})$ and $\Oh(\eps^{-2})\times \Oh(\frac{n}{z}+\frac{n}{\eps^2 k})$.  Since $\frac{n}{z} \ge z$, this process takes $\Ohtilde(n + \frac{nz}{\eps^2 k})=\Ohtilde(n + \frac{n\eps^2 k}{\eps^2 k})=\Ohtilde(n)$
time provided that $z^{\alpha} > \eps^{-2}$, which follows from $z^{\alpha} = \eps^{2\alpha} k^{\alpha} > \eps^{2\alpha-2\alpha-\tfrac{2\alpha}{\alpha}}=\eps^{-2}$
or $z^{\alpha} =n^{\tfrac{\alpha}{2}} > \eps^{\tfrac{-4}{\alpha}\cdot \tfrac{\alpha}{2}}=\eps^{-2}$.

This way, we obtained an algorithm with expected running time $\Ohtilde(n)$ and with small constant probability of error for every position $i\in Q$.
We repeat the algorithm $\Oh(\log n)=\Ohtilde(1)$ times to achieve with high probability bounds on both correctness and running time.
\end{proof}

\begin{corollary}\label{thm:apxmm}
If $m > \eps^{-\max\big(4+\tfrac{4}{\alpha}, 10\big)}$, then \cref{prob:apx} can be solved in $\Ohtilde(n)$ time using a randomized algorithm returning correct answers with high probability.
\end{corollary}
\begin{proof}
	We apply an exact $\Ohtilde(n)$-time algorithm~\cite{Cli} for $k = \sqrt{m}$ (see also \cref{thm:exact}) to determine $d_i$ at locations $i$ for which $d_i \le \sqrt{m}$.
	As for the distances $d_i \ge \sqrt{m}$, we apply the algorithm in \cref{thm:fixedmm} for all $\tfrac12 \sqrt{m}\le k \le m$ that are powers of two.
	In this setting, we have $k > \sqrt{m} > \eps^{-\max\big(2+\tfrac{2}{\alpha},5\big)}$
	and $n \ge m  > \eps^{-\max\big(4+\tfrac{4}{\alpha}, 10\big)}> \eps^{-\max\big(\tfrac{4}{\alpha}, 6\big)}$, so the running time of each call is $\Ohtilde(n)$, and the number of calls is $\Oh(\log m)=\Ohtilde(1)$.
\end{proof}







\section{Sublinear-Time Algorithms}\label{sec:sublin}

\newcommand{\xx}{\vec{x}}
\newcommand{\yy}{\vec{y}}

In this section, we show how to find locations with
Hamming distance approximately (up to a constant factor) less than  a fixed threshold value $k$
in truly sublinear time, provided that $k$ is not too small and the number of occurrences to report is sublinear. 
In comparison, our earlier running times have an $\Omega(|Q|)$ term, which is at least linear in the worst case.


We use known data structures for high-dimensional approximate spherical range reporting (which is related to approximate nearest neighbor search):

\begin{lemma}\label{lem:ann}
Given a constant $c>1$, let $\rho_q$ and $\rho_u$ be parameters satisfying
$c\sqrt{\rho_q} + (c-1)\sqrt{\rho_u} = \sqrt{2c-1}$.

Let $\xx_1,\ldots,\xx_n$ be vectors in $\{0,1\}^d$, and let $k\in [d]$.
In $\OO(dn^{1+\rho_u})$ time one can build a data structure that supports the following operations:
\begin{enumerate}[label={(\roman*)}]
\item given any query vector $\yy\in\{0,1\}^d$, report a
set $A$ satisfying $\{i : \HAM(\xx_i,\yy)\le k\}\subseteq A\subseteq \{i : \HAM(\xx_i,\yy)\le ck\}$
with high correctness probability, in $\OO(d(n^{\rho_q} + |A|n^{\rho_u}))$ time;
\item given any query vector $\yy\in\{0,1\}^d$ and query interval $I$, report a set $A$ satisfying $\{i\in I: \HAM(\xx_i,\yy)\le k\}\subseteq A\subseteq \{i\in I: \HAM(\xx_i,\yy)\le ck\}$
with high correctness probability, in $\OO(d(n^{\rho_q} + |A|n^{\rho_u}))$ time.
\end{enumerate}
\end{lemma}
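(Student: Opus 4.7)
The plan is to derive both parts from a known locality-sensitive hashing result and then wrap it with a standard segment-tree construction over the index set.

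For part (i), I would invoke the data-dependent LSH data structure of Andoni, Laarhoven, Razenshteyn and Waingarten, which achieves exactly the tradeoff curve $c\sqrt{\rho_q} + (c-1)\sqrt{\rho_u} = \sqrt{2c-1}$ for $c$-approximate near-neighbour search in $\{0,1\}^d$ under Hamming distance. The \emph{reporting} variant follows by a standard repetition: build $O(\log n)$ independent hash tables, and for a query $\yy$ return every indexed vector that collides with $\yy$ under some table's concatenated hash. The expected number of far points ($\HAM>ck$) that collide is balanced against the search exponent so as to give the $d(n^{\rho_q}+|A|n^{\rho_u})$ bound, and $O(\log n)$ repetitions push the failure probability of missing any near point below $1/n^{\Theta(1)}$; this log factor is absorbed in $\OO$. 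There are no new ideas here---I would simply cite the relevant theorem and its reporting corollary.

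For part (ii), I would impose a balanced binary tree on the index set $\{1,\ldots,n\}$. At each node $v$ let $S_v$ be the set of $\xx_i$ whose index lies in the interval $R_v$ associated with $v$, and build the part (i) data structure on $S_v$. Since each vector belongs to only $O(\log n)$ ancestor nodes, the total preprocessing time remains $\OO(dn^{1+\rho_u})$. A query interval $I$ is decomposed into $m=O(\log n)$ canonical subintervals $R_{v_1},\ldots,R_{v_m}$; I query each corresponding structure, obtain a set $A_j$, and return $A=\bigcup_j A_j$. Correctness is immediate: each $i\in I$ lies in exactly one $R_{v_j}$, so the sandwich condition on each $A_j$ translates to the required one on $A$. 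Writing $n_j=|S_{v_j}|$ with $n_j\le n$ and $\sum_j|A_j|=|A|$, the total query time is $\sum_j \OO\bigl(d(n_j^{\rho_q}+|A_j|n_j^{\rho_u})\bigr) \le \OO\bigl(d(m\cdot n^{\rho_q}+|A|n^{\rho_u})\bigr) = \OO\bigl(d(n^{\rho_q}+|A|n^{\rho_u})\bigr)$, where the $\log n$ from the decomposition is again absorbed in $\OO$.

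The main (and essentially only) obstacle is verifying that the black-box LSH result one cites really delivers the reporting formulation of part (i) with the stated high-probability correctness and the advertised exponents on both space and query; once that is in hand, the segment-tree wrapper yielding (ii) is a one-line observation.
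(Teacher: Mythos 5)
For part (ii) your argument is essentially the same as the paper's: both build the part (i) structure over a logarithmic number of canonical intervals (you call it a segment tree, the paper calls it dyadic intervals / one-dimensional range trees), both decompose a query interval into $O(\log n)$ canonical pieces, and both observe that preprocessing and query time only grow by a logarithmic factor, which is absorbed in the $\OO$ notation. That step is fine.

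For part (i) there is a genuine gap, and you have in fact put your finger on it yourself (``the main and essentially only obstacle is verifying that the black-box LSH result really delivers the reporting formulation''). The sketch you give for bridging it---build $O(\log n)$ independent hash tables and report all colliders---is the vanilla LSH paradigm, but it does not match the structure you are citing: \cite{AndoniLRW17} is a \emph{data-dependent} hashing scheme, not a fixed family of $O(\log n)$ concatenated hash tables, and it is precisely this data-dependence that buys the exponent curve $c\sqrt{\rho_q}+(c-1)\sqrt{\rho_u}=\sqrt{2c-1}$. More importantly, the split of exponents between $n^{\rho_q}$ (the additive ``search'' cost) and $|A|n^{\rho_u}$ (the per-reported-point cost) does not arise from balancing collision probabilities inside a single LSH query as you describe; in the paper it comes from the fact that \cite{AndoniLRW17} supports \emph{updates} in $\OO(dn^{\rho_u})$ time and \emph{queries} in $\OO(dn^{\rho_q})$ time, and that Ahle, Aum\"uller, and Pagh~\cite[Appendix~E]{AhleAP17} observed a black-box reduction from $c$-approximate spherical range reporting to such a dynamic near-neighbor structure that yields exactly the $\OO(d(n^{\rho_q}+|A|n^{\rho_u}))$ bound. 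Without invoking that reduction (or reproving it), your claim that the collision count ``is balanced against the search exponent so as to give the bound'' is not a proof step; it is the missing lemma. So part (i) should either cite the Ahle et al.\ reduction directly, as the paper does, or carefully reconstruct a delete-and-re-query argument on top of the dynamic ANN structure to explain where the $|A|n^{\rho_u}$ term actually comes from.
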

\begin{proof}
Andoni et al.~\cite{AndoniLRW17} gave (randomized) dynamic data structures
for $c$-approximate nearest neighbor search in Hamming space, using data-dependent locality-sensitive hashing: with
the time $\OO(dn^{\rho_u})$ per update (insertion or deletion), the query time
is $\OO(dn^{\rho_q})$ for parameters $\rho_u$ and $\rho_q$ satisfying the stated equation.
Ahle et al.~\cite[Appendix~E]{AhleAP17} observed that such a data structure can be used to answer $c$-approximate spherical range reporting queries in
$\OO(d(n^{\rho_q} + |A|n^{\rho_u}))$ time.  This proves part~(i).

Part (ii) follows from part (i) by a standard technique (namely, one-dimensional range trees): for each dyadic\footnote{
A \emph{dyadic} interval is an interval of the form $[2^ij, 2^i(j+1))$ for integers $i,j$.
} 
interval $J$, we build the data structure from part (i) for the subset $\{\xx_i: i\in J\}$.
The preprocessing time and space increase only by a logarithmic factor.
A query interval $I$ can be decomposed into a union of $O(\log n)$ disjoint dyadic intervals.  So, the query time also increases only by a logarithmic factor.
\end{proof}

\begin{theorem}\label{thm:sublin}
Given a constant $c>1$, let $\rho_q$ and $\rho_u$ be parameters satisfying
$c\sqrt{\rho_q} + (c-1)\sqrt{\rho_u} = \sqrt{2c-1}$.

Let $\eps\in(0,1)$ be an arbitrarily small constant.
Given a text $T\in \Sigma^n$, a pattern $P\in \Sigma^m$, and an integer $k\le m$,
there is a randomized algorithm to report a set of locations, such that
every location of the text with Hamming distance at most $(1-\eps)k$ is reported, and every reported location has Hamming distance at most $(1+\eps)ck$.  The algorithm is correct with high probability and has expected running time
\[\OO_\eps\left(n^{\frac{1+\rho_u}{2+\rho_u-\rho_q}} +
\frac{n}{(\tfrac{kn}{m})^{\frac{1-\rho_q}{3-3\rho_q+\rho_u}}} + \tfrac{n}{k^{1-\rho_q}} +
\OCC\cdot \min\left\{ n^{\frac{\rho_u}{2+\rho_u-\rho_q}},\,
(\tfrac{kn}{m})^{\frac{\rho_u}{3-3\rho_q+\rho_u}},\,
k^{\rho_u}\right\}\right),\]
where $\OCC$ is the number of locations in the text with Hamming distance at most $(1+\eps)ck$.
\end{theorem}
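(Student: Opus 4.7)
The plan is to combine a pattern-block decomposition with the ANN data structure of Lemma~\ref{lem:ann}, with the three terms in the running time arising from three choices of a block-length parameter $b$. First I would partition $P$ into $r = \lceil m/b\rceil$ contiguous blocks $P_1,\ldots,P_r$ of length $b$. By Markov's inequality, if $\HAM(T[i..i+m-1],P)\le (1-\eps)k$, then at least $r/2$ of the blocks $P_j$ have Hamming distance at most $\tau:=2kb/m$ to the aligned text substring $T[i+(j-1)b..i+jb-1]$. This reduces the problem to: for each pattern block, find the length-$b$ text substrings that are $\tau$-close (up to the $c$-approximation inherent in the ANN).

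Next, I would build the data structure of Lemma~\ref{lem:ann}(ii) on the $r$ pattern blocks and query it with length-$b$ text substrings at a random sample of $\tilde O(bn/m)$ text positions, chosen dense enough that every true match is hit by a query aligned to a good block with high probability (amplified by $O(\log n)$ independent samples). Each returned pair $(\ell,j)$ contributes a vote for the candidate match position $i=\ell-(j-1)b$, and the range-reporting feature of part~(ii) restricts queries to the relevant interval of pattern-block indices. Each candidate receiving enough votes is verified by an approximate Hamming-distance estimator: sampling $\tilde O(m/(\eps^2 k))$ random pattern positions and applying a Chernoff bound distinguishes $\HAM\le(1-\eps)k$ from $\HAM>(1+\eps)ck$ in $\tilde O(m/k)$ time per candidate, rather than the naive $O(m)$.

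The total cost decomposes schematically as preprocessing $\tilde O\bigl(b\cdot(m/b)^{1+\rho_u}\bigr)$, query $\tilde O(n/b^{1-\rho_q})$, and reporting $\tilde O(\OCC\cdot b^{\rho_u})$, plus verification. Three natural regimes for $b$ give the three non-$\OCC$ terms in the statement: strategy~1 takes $b\approx n^{1/(2+\rho_u-\rho_q)}$, balancing preprocessing against queries; strategy~2 takes $b\approx (kn/m)^{1/(3-3\rho_q+\rho_u)}$, which arises when the verification term enters the balance; and strategy~3 takes $b\approx k$, the largest meaningful block length (since $\tau\ge 1$ forces $b\ge m/(2k)$, i.e.\ $r\le 2k$). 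Since we do not know $\OCC$ in advance, I would run all three settings and output the union, the $\min$ in the $\OCC$ term reflecting the fact that each occurrence is reported with cost $b^{\rho_u}$ by the best of the three.

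I expect the main technical obstacle to be handling the shift issue without losing sublinearity: if one naively iterated the algorithm over all $b$ offsets modulo $b$, preprocessing would blow up by a factor of $b$. The fix is to keep the ANN built once on the pattern side, treating shifts implicitly by sampling text query positions densely enough (rather than preprocessing multiple shifted copies of the text), and to use part~(ii) of Lemma~\ref{lem:ann} to prune queries by pattern-index interval so that each occurrence is only charged once in the reporting cost. A secondary subtlety is controlling the number of false positives returned by the ANN (candidates with true Hamming distance in the gap between $\tau$ and $c\tau$); the sampling-based verifier, tuned to the approximation factors $(1\pm\eps)c$ stated in the theorem, is what keeps this contribution within $\OCC\cdot b^{\rho_u}$ rather than inflating it.
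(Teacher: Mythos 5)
Your route (block-decompose $P$, Markov over blocks, ANN on the blocks, voting, and a separate sampling-based verifier) is genuinely different from the paper's proof, which instead reuses the paper's generic sketching algorithm: it replaces the strings $X_u^{(\ell)}$, $Y_v^{(\ell)}(i)$ coming from the residue-class decomposition $(i\bmod p)=u_i+v_i z$ by polylogarithmic-length hashed fingerprints, builds the ANN of Lemma~\ref{lem:ann} on the resulting $z$ vectors $\xx_u\in\{0,1\}^L$ (so the ambient dimension is $L=\OO(1)$, not a block length), and uses a ``critical index'' amortization to issue only $\tOO(n/z+n/k)$ ANN queries. The paper needs no voting and no separate verification: the thresholds $\tfrac12(1-(1-\beta)^k)L$ vs.\ $\tfrac12(1-(1-\beta)^{(1+\teps)ck})L$ with $\beta=\Theta(\eps)/(ck)$ are wired directly into the ANN radius.

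However, your proposal has gaps that prevent the cost analysis from going through. First, the ANN dimension cannot be ignored: if you build Lemma~\ref{lem:ann}(ii) on the $r=m/b$ raw blocks, then $d=b$, each query costs $\OO(b((m/b)^{\rho_q}+|A|(m/b)^{\rho_u}))$, and over $\tOO(bn/m)$ queries the search cost is $\tOO(b^{2-\rho_q}n/m^{1-\rho_q})$, which is not your claimed $\tOO(n/b^{1-\rho_q})$. In fact, plugging your regime $b\approx n^{1/(2+\rho_u-\rho_q)}$ into the preprocessing $\tOO(b(m/b)^{1+\rho_u})$ gives $m^{1+\rho_u}n^{-\rho_u/(2+\rho_u-\rho_q)}$, which exceeds $n$ whenever $m$ is close to $n$. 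The confusion appears to be that your $b$ is playing the role of the paper's $z$, but the paper's ANN holds $z$ points while yours holds $m/b$; the three exponents in the theorem are obtained by choosing $z$ (i.e.\ your $m/b$, not your $b$) equal to those expressions. Second, the reporting cost is underestimated: a location with Hamming distance at most $(1+\eps)ck$ can be returned by up to $\Theta(m/b)$ of its block-aligned queries, so the number of reported (query, block) pairs can be $\Theta(\OCC\cdot m/b)$, and the reporting term becomes $\tOO(\OCC\cdot(m/b)\cdot b\cdot(m/b)^{\rho_u})$ rather than $\tOO(\OCC\cdot b^{\rho_u})$. Restricting the block index via Lemma~\ref{lem:ann}(ii) does not repair this, since all $m/b$ alignments are in play. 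Third, the cost of the sampling verifier on all candidates is absent from your cost decomposition, and you would need a separate argument bounding the number of candidates in the approximation gap (those with per-block distance between $\tau$ and $c\tau$ but global distance $>(1+\eps)ck$); without it, the candidate list may be far larger than $\OCC$.
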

\begin{proof}
We follow our generic algorithm but with a few modifications.
We reset $\beta = \frac{\tteps}{ck}$ for some constant $\tteps=\Theta(\eps)$.
In \cref{ln:constructX,ln:constructY}, we replace the strings $X_u^{(\ell)}$ and
$Y_v^{(\ell)}(i)$ with their hashed fingerprints: $x_u^{(\ell)}=h^{(\ell)}(F^{(\ell)}(X_u^{(\ell)}))$ and
$y_v^{(\ell)}(i)=h^{(\ell)}(F^{(\ell)}(Y_v^{(\ell)}(i)))$, for
randomly chosen functions $F^{(\ell)}\in {\cal F}_M$ and
$h^{(\ell)}\in {\cal H}_M$, where $M=n^{\Oh(1)}$ is a sufficiently large prime.
We do not explicitly store $y_v^{(\ell)}(i)$
for all~$i$, but just for those $i$ for which $Y_v^{(\ell)}(i)$ changes as $i$ increases.
As before, Lines~\ref{ln:pick}--\ref{ln:constructY} take
$\OO_\eps(\sqrt{\frac{snm}{k}} + \frac{n}{k})$ time.
However, to aim for sublinear total time, we need to implement
Lines~\ref{ln:query-generic}--\ref{ln:output} differently.

Recall that $u_i\in [z]$ and $v_i\in[\ceil{p/z}]$ are indices
defined to satisfy $(i\bmod p) = u_i + v_iz$.
As $i$ increases, if the value $y_{v_i}^{(\ell)}(i)$ changes for some $\ell\in [L]$,
we say that the index $i$ is \emph{critical}.
We reuse an argument from the proof of \cref{thm:fixedmm} to bound the number of critical indices:
The index $v_i$ changes $O(n/z)$ times.
If $v_i$ is unchanged as $i$ increments,
then $y_{v_i}^{(\ell)}(i)$ changes
only when $(i-vz)\bmod p$
or $(i+m-vz)\bmod p$ is in $B^{(\ell)}$, which
happens with probability $O(\beta)=O_\eps(1/k)$.
Thus, the expected number of critical indices is $O_\eps(\frac nz + \frac nk)$ for each fixed $\ell$, and remains $\tO_\eps(\frac nz + \frac nk)$ over all logarithmically many $\ell\in [L]$.

We build the data structure from \cref{lem:ann} storing the vectors
$\xx_u=\langle x_u^{(\ell)}:\ell\in [L]\rangle$ for all $u\in [z]$,
in $\OO(z^{1+\rho_u})$ time.
Consider two consecutive critical indices $a$ and $b$.
For all $i\in [a,b)$, the vector $\yy_{v_i}=\langle y_{v_i}^{(\ell)}:\ell\in [L]\rangle$ is unchanged.
We report an index set, where every index $i\in [a,b)$ such that $\xx_{u_i}$ has Hamming distance
at most $\tfrac12 (1-(1-\beta)^k) L$ from $\yy_{v_i}$ is reported,
and every reported index $i\in [a,b)$ has Hamming distance
at most $\tfrac12 (1-(1-\beta)^{(1+\teps)ck}) L$,
for some appropriate choice of $\teps=\Theta(\eps)$.
This reduces to the type of query supported by \cref{lem:ann}(ii),
since $i\mapsto u_i$ maps $[a,b)$ into at most two intervals.
Note that
the ratio
$\frac{1-(1-\beta)^{(1+\teps)ck}}{1-(1-\beta)^k}
\ge \frac{(1+\teps)ck\beta - O(ck\beta)^2}{k\beta}
\ge (1+\teps)c - O(\tteps c^2)$
exceeds $c$, by choosing $\tteps=\Theta(\teps)$ with a sufficiently small constant factor.

By a similar probabilistic analysis as before, the error probability per $i$ is $O(1/s)$.  Thus, the total expected number of indices reported is $O(\OCC + \frac ns)$.  The time to answer all $\tO_\eps(\frac nz + \frac nk)$ queries using \cref{lem:ann}(ii) is
$\OO_\eps((\frac nz + \frac nk) z^{\rho_q} + (\OCC + \frac ns)z^{\rho_u})$ in expectation.
The overall expected running time is
$$\OO_\eps\left(\sqrt{\tfrac{snm}{k}} + z^{1+\rho_u} +
\left(\tfrac{n}{z} + \tfrac nk\right) z^{\rho_q} + \left(\OCC + \tfrac ns\right)z^{\rho_u}\right).$$

To balance all the terms,
set $s=z^{2\rho_u/3}(\frac{kn}{m})^{1/3}$
and $z=\min\{n^{1/(2+\rho_u-\rho_q)}, \, (\frac{kn}{m})^{1/(3-3\rho_q+\rho_u)},
\, k\}$.
Then the expected running time is bounded by the expression stated in the theorem.

The error probability per location is smaller than a constant $<\tfrac 12$.  We can lower the error probability by repeating logarithmically many times and outputting a location when it lies in a majority of all the reported sets.
\end{proof}

\begin{example}
For $\rho_u=\rho_q=1/(2c-1)$ and for all $m\le n$, the time bound is at most
$$\OO_\eps\left( n^{c/(2c-1)} + \tfrac{n}{k^{(2c-2)/(6c-5)}}
 + \OCC\cdot n^{1/(4c-2)}\right).$$
For $\rho_u=0$ and $\rho_q = (2c-1)/c^2$,
the time bound is at most
$$\OO_\eps\left(n^{c^2/(2c^2-2c+1)} +
\tfrac{n}{k^{1/3}} + \tfrac{n}{k^{(c-1)^2/c^2}} +
\OCC\right).$$
In particular, in the case of $c=2$, the above bounds are
$\OO_\eps(n^{2/3} + n/k^{2/7} + \OCC\cdot n^{1/6})$
and $\OO_\eps(n^{4/5} + n/k^{1/4} + \OCC)$, though other tradeoffs are possible.
\end{example}

For $c$ sufficiently close to~1, one can do better by using known \emph{offline} approximate nearest neighbor algorithms:

\begin{lemma}\label{lem:ann2}
Let $c=1+\eps$ for a sufficiently small constant $\eps>0$.
A batch of $n$ offline queries of the type in \cref{lem:ann}(i) and (ii) can be answered in
\[O\left(d^{O(1)}\left(n^{2-\Omega(\eps^{1/3}/\log^{2/3}(1/\eps))} + {\cal A} n^{O(\eps^{1/3}/\log^{2/3}(1/\eps))}\right)\right)\] time with high correctness probability,
where ${\cal A}$ is the total size of the reported sets.
\end{lemma}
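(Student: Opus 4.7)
My plan is to assemble the offline primitive from two ingredients --- the probabilistic polynomial method for approximate Hamming thresholds and (rectangular / sparse) matrix multiplication --- and then add the standard reductions on top, namely decision-to-reporting and the dyadic range-tree lift already used in \cref{lem:ann}(ii).

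For the core batch subroutine I would focus first on the \emph{decision} version: given $n$ data vectors and $n$ query vectors in $\{0,1\}^d$, decide for each query whether some data point has $\HAM \le k$ and return either such a point or a certificate that none exists closer than $ck$. Following Alman--Chan--Williams--style constructions, I would approximate the indicator of ``$\HAM(\xx,\yy)\le k$ versus $\HAM(\xx,\yy)\ge ck$'' by a probabilistic polynomial of carefully chosen degree $D$ in the coordinates of $\xx$ and $\yy$. Expanding in monomials yields $d^{O(D)}$-sparse feature vectors, and the inner products needed for all $n^2$ pairs can be assembled via a multiplication of an $n \times d^{O(D)}$ matrix by a $d^{O(D)} \times n$ matrix (block-partitioned, so that the trivial $n^2$ lower bound is never hit). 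Optimizing $D$ against the monomial blowup and the best available rectangular matrix-multiplication exponent is exactly what produces the characteristic saving $n^{2-\Omega(\eps^{1/3}/\log^{2/3}(1/\eps))}$.

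To lift this to the reporting primitive demanded by \cref{lem:ann}(i), I would apply the standard sampling self-reduction: iteratively sample random subsets of the data at geometrically decreasing rates, run the decision primitive between the queries and each sample, and collect the returned indices; any pair with $\HAM \le k$ is captured at its appropriate scale with constant probability, and $O(\log n)$ independent repetitions raise coverage to high probability. Each recovered pair costs a marginal $n^{O(\eps^{1/3}/\log^{2/3}(1/\eps))}$ (the same exponent, because the scheme partitions into blocks of that size), giving the second term $\mathcal A \cdot n^{O(\eps^{1/3}/\log^{2/3}(1/\eps))}$. For part~(ii) I would reuse the dyadic range-tree reduction: build one copy of the part~(i) offline primitive on each dyadic interval of data, decompose each query interval into $O(\log n)$ dyadic pieces, and re-bucket the batch of offline queries accordingly; this adds only polylogarithmic factors.

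The main obstacle is the very first step --- pinning down the exponent $\eps^{1/3}/\log^{2/3}(1/\eps)$ precisely. It requires the right probabilistic polynomial approximating the Hamming threshold, a tight sparsity bound on its monomial expansion, and the best available rectangular matrix-multiplication exponent, all balanced against one another; everything past that (the decision-to-reporting sampling argument and the dyadic range-tree lift) is essentially textbook and reuses machinery already present in the paper.
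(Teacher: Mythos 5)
Your high-level plan matches what the paper actually does: cite Alman--Chan--Williams for offline $(1+\eps)$-approximate Hamming nearest neighbor via the probabilistic polynomial method plus rectangular matrix multiplication to get the $n^{2-\Omega(\eps^{1/3}/\log^{2/3}(1/\eps))}$ decision bound, then lift to reporting for part~(i) and apply the dyadic range-tree reduction for part~(ii). The paper's own proof is terse---it simply says it is ``straightforward to modify'' the ACW algorithm for spherical range reporting---so the real content is in how you fill that gap, and that is where I have a concern.

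The sampling self-reduction you propose for decision-to-reporting does not obviously achieve high-probability coverage at the stated cost. Fix a query with $t$ near neighbors and a particular neighbor $\xx_i$. At sampling rate roughly $1/t$, the event that $\xx_i$ is sampled \emph{and} is returned as the witness happens with probability only $\Theta(1/t)$, so $O(\log n)$ independent repetitions give capture probability about $\Theta((\log n)/t)$ for that specific $\xx_i$, not $1 - 1/\mathrm{poly}(n)$. You would need $\Theta(t \log n)$ repetitions at that scale, and $t$ can be as large as $\mathcal{A}$, which blows up the cost. The cleaner route---and, I believe, what both the paper and ACW intend---is to use the block structure already built into the polynomial-method construction. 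ACW partitions the $n$ data points into groups of size $g = n^{\Theta(\eps^{1/3}/\log^{2/3}(1/\eps))}$ and uses one matrix multiplication to decide, for every (query, group) pair, whether the group contains a point within the approximate threshold. To report, simply scan each positive group exhaustively in $O(g d)$ time; every positive group contributes at least one element to the reported set (correctness of the decision holds w.h.p.), so the total scanning cost is $O(\mathcal{A} \cdot g \cdot d) = O(d \cdot \mathcal{A} \cdot n^{O(\eps^{1/3}/\log^{2/3}(1/\eps))})$, giving exactly the second term with no extra sampling or repetition. Your remark about ``partitions into blocks of that size'' is the right instinct---it should be the \emph{mechanism} for reporting, not merely an accounting justification for the exponent. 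With that substitution, the rest of your argument (in particular the dyadic range-tree lift for part~(ii), which carries over to the offline setting because it just re-buckets the batch by dyadic interval) is fine.
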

\begin{proof}
Alman, Chan, and Williams~\cite{AlmanCW16,AlmanCW20}
gave randomized algorithms for
offline $(1+\eps)$-approximate nearest neighbor search,
via the polynomial method and rectangular matrix multiplication:
the running time for $n$ queries is
$O\big(d^{O(1)}n^{2-\Omega(\eps^{1/3}/\log^{2/3}(1/\eps))}\big)$.
It is straightforward to modify their algorithms for
$(1+\eps)$-approximate spherical range reporting in the time bound
stated in the lemma.  This proves part (i).

Part (ii) follows from part (i) by the same argument as before
using dyadic intervals (which carries over to the offline setting).
\end{proof}

\begin{theorem}
Let $c=1+\eps$ for a sufficiently small constant $\eps>0$.
The expected running time in \cref{thm:sublin} is at most
\[ \tO\left( \frac{n}{k^{\Omega(\eps^{1/3}/\log^{2/3}(1/\eps))}} +
\OCC\cdot k^{O(\eps^{1/3}/\log^{2/3}(1/\eps))} \right).\]
\end{theorem}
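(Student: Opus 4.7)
The plan is to follow the proof of \cref{thm:sublin} verbatim up to the point where the online structure of \cref{lem:ann} is invoked, and then replace that invocation by a single offline application of \cref{lem:ann2}. All of the $\tO_\eps(n/z+n/k)$ critical-index queries are collected in advance into one batch, which is legitimate because each query is determined entirely by the precomputed fingerprints $\xx_{u_i}$ and $\yy_{v_i}$ and does not depend on earlier answers. Fingerprint construction, the critical-index count, and the per-query error analysis are all inherited verbatim from \cref{thm:sublin}.

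Write $\delta:=\Omega(\eps^{1/3}/\log^{2/3}(1/\eps))$ for the exponent savings supplied by \cref{lem:ann2}. Applying that lemma to $z$ data vectors and $N=\tO_\eps(n/z+n/k)$ queries---by partitioning the queries into $\lceil N/z\rceil$ sub-batches of size at most $z$ and invoking the lemma on each---yields a total query cost of $\tO(Nz^{1-\delta}+z^{2-\delta}+(\OCC+n/s)\,z^{O(\delta)})$. Adding the unchanged preprocessing $\tO_\eps(\sqrt{snm/k}+n/k)$, the overall expected running time becomes
\[
\tO_\eps\!\left(\sqrt{\tfrac{snm}{k}}+\bigl(\tfrac{n}{z}+\tfrac{n}{k}\bigr)z^{1-\delta}+z^{2-\delta}+\bigl(\OCC+\tfrac{n}{s}\bigr)z^{O(\delta)}\right).
\]

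Now I rebalance with $z:=\sqrt{k}$ and $s:=(kn/m)^{1/3}\,k^{O(\delta)}$. Under these choices $(n/z)z^{1-\delta}=n/k^{\delta/2}$, the term $(n/k)z^{1-\delta}=nk^{-(1+\delta)/2}$ is strictly dominated, and $z^{2-\delta}=k^{1-\delta/2}\le n/k^{\delta/2}$ since $k\le n$. The output-dependent piece splits as $\OCC\cdot k^{O(\delta)}+(n/s)\,k^{O(\delta)}$, and a direct calculation gives $\sqrt{snm/k}=(n/s)\,k^{O(\delta)}=n^{2/3}m^{1/3}k^{-1/3+O(\delta)}$, which is at most $n/k^{\delta/2}$ because $m\le n$. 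Hence every summand is bounded by $\tO(n/k^{\delta/2}+\OCC\cdot k^{O(\delta)})$, matching the statement once $\delta/2$ is identified with the $\Omega(\eps^{1/3}/\log^{2/3}(1/\eps))$ exponent.

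The only step requiring genuine care is this balancing: one must verify that the single universal choice $z=\sqrt{k}$ (replacing the case-split $z=\min\{\ldots,k\}$ from \cref{thm:sublin}) works uniformly across the full range $1\le k\le m\le n$, and that offline batching of \cref{lem:ann2} really delivers the cost $Nz^{1-\delta}+z^{2-\delta}$ advertised above. The other ingredients---the fingerprint reduction, the critical-index counting, the random-hash success analysis, and the $\Theta(\eps)$ slack between the nominal and effective approximation ratios---carry over from \cref{thm:sublin} without modification.
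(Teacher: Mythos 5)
Your proposal is correct and follows the same route as the paper: swap the online structure of \cref{lem:ann} for an offline invocation of \cref{lem:ann2}. The paper's own proof is extremely terse (``Effectively, we can set $\rho_q=1-\Theta(\cdot)$ and $\rho_u=\Theta(\cdot)$ in the time bound''), which conceals that the offline batch cost behaves like $Nz^{1-\delta}+z^{2-\delta}$ rather than the online $z^{1+\rho_u}+Nz^{\rho_q}$ --- the ``preprocessing''-analogue term $z^{2-\delta}$ is not literally $z^{1+\rho_u}$ with $\rho_u=\delta$, so blindly substituting into \cref{thm:sublin}'s balanced formula (with its $z=\min\{n^{1/(1+2\delta)},\ldots\}$) would require rechecking. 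Your version does exactly this rechecking: you spell out the $\lceil N/z\rceil$ sub-batching, account for the $z^{2-\delta}$ term explicitly, and then avoid the case-split entirely by taking the single choice $z=\sqrt{k}$ (so $z^{2-\delta}=k^{1-\delta/2}\le n/k^{\delta/2}$ whenever $k\le n$). The arithmetic you give for $s=(kn/m)^{1/3}k^{O(\delta)}$, the bound $\sqrt{snm/k}=n^{2/3}m^{1/3}k^{-1/3+O(\delta)}\le n/k^{\delta/2}$, and the domination of the remaining terms all check out. In short: same idea as the paper, but your write-up fills a real gap in the paper's one-line justification by making the offline batching and the rebalancing precise.
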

\begin{proof}
We proceed as in the proof of \cref{thm:sublin}, but note that
the generated queries are offline, for which \cref{lem:ann2}
is applicable.  Effectively, we can set $\rho_q=1-\Theta(\eps^{1/3}/\log^{2/3}(1/\eps))$ and $\rho_u=\Theta(\eps^{1/3}/\log^{2/3}(1/\eps))$ in the time bound.
\end{proof}

In the case of distinguishing between distance 0 (exact match) versus distance more than $\delta m$, the algorithm in \cref{thm:sublin} can be simplified:

\begin{theorem}\label{thm:prop:test}
Given a text string of length $n$, a pattern string of length $m$, and a value $\delta > 0$,
there is a randomized algorithm to report a set of locations, such that
every location of the text with Hamming distance $0$ is reported, and every reported location has Hamming distance at most $\delta m$.  The algorithm is correct with high probability and has expected running time
\[\tO\left(\delta^{-1/3}n^{2/3} + \delta^{-1} \tfrac{n}{m} + \OCC\right),\]
where $\OCC$ is the number of locations in the text with Hamming distance at most $\delta m$.
\end{theorem}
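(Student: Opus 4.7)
The plan is to specialize the algorithm of \cref{thm:sublin} to the task of distinguishing Hamming distance $0$ from Hamming distance greater than $\delta m$. In this regime, the approximate spherical range reporting data structure of \cref{lem:ann} is overkill: since an exact match forces every fingerprint coordinate to agree, a plain hash table over the $L$-dimensional fingerprint vectors already suffices, effectively setting $\rho_u = \rho_q = 0$.

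I would follow the framework of \cref{thm:sublin} with $k := \delta m$ and $\beta := \Theta(1/(\delta m))$, retaining the same definitions of $x_u^{(\ell)}$ and $y_v^{(\ell)}(i)$ as hashed fingerprints $h^{(\ell)}(F^{(\ell)}(\cdot))$. If the pattern matches the text exactly at location $i$, then $X_{u_i}^{(\ell)} = Y_{v_i}^{(\ell)}(i)$ for every $\ell$, so $\xx_{u_i} = \yy_{v_i}(i)$ up to fingerprint collisions (which occur with probability $n^{-\Omega(1)}$ for modulus $M = n^{\Oh(1)}$). Conversely, if the Hamming distance at $i$ is at least $\delta m$, then for each $\ell$ the sample $B^{(\ell)}$ hits a mismatch with probability $1 - (1-\beta)^{\delta m} = \Omega(1)$, so the probability that all $L$ coordinates agree is $e^{-\Omega(L)}$. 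Taking $L = \Theta(\log s)$ drives the per-index error down to $O(1/s)$, which is absorbed by the expected $O(\OCC + n/s)$ bound on total reported-set size from the earlier analysis.

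Next, I would store the $\xx_u$'s in a hash table augmented with a one-dimensional range tree over dyadic sub-intervals of $[z]$, exactly mirroring the reduction used in \cref{lem:ann}(ii). This structure is built in $\tO(z)$ time and answers each equality query over at most two sub-intervals in $\tO(1 + |A|)$ time. The critical-index counting argument of \cref{thm:sublin} carries over verbatim, giving $\tO(n/z + n/(\delta m))$ queries in expectation. Combining this with the preprocessing cost $\tO(\sqrt{snm/k} + n/k) = \tO(\sqrt{sn/\delta} + n/(\delta m))$ for the pattern fingerprints yields total expected running time
\[ \tO\!\left(\sqrt{\tfrac{sn}{\delta}} + z + \tfrac{n}{z} + \tfrac{n}{\delta m} + \OCC + \tfrac{n}{s}\right). \]
Setting $z := \sqrt n$ balances $z + n/z = O(\sqrt n)$, and setting $s := (n\delta)^{1/3}$ balances $\sqrt{sn/\delta} = n/s = \delta^{-1/3} n^{2/3}$, which dominates $\sqrt n$ since $\delta < 1$; this recovers the stated bound.

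The only point that genuinely needs checking is that the critical-index argument of \cref{thm:sublin} still applies after swapping the ANN substructure for hash tables. This is immediate because criticality depends only on when the random sample $B^{(\ell)}$ intersects the sliding-window boundaries, not on how the resulting fingerprints are subsequently compared. Consequently no new conceptual obstacle arises, and the proof is essentially a parameter-tuning exercise on the specialized version of the generic framework.
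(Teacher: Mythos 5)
Your proposal matches the paper's own proof of \cref{thm:prop:test} essentially step for step: both replace the approximate‑nearest‑neighbor structure of \cref{lem:ann} with a plain hash table plus dyadic one‑dimensional range search (i.e., effectively $\rho_q=\rho_u=0$), set $k=\Theta(\delta m)$ and $\beta=\Theta(1/(\delta m))$, reuse the critical‑index accounting from \cref{thm:sublin} verbatim, and balance $z$ against $n/z$ and $\sqrt{sn/\delta}$ against $n/s$ to arrive at $\tO(\delta^{-1/3}n^{2/3}+\delta^{-1}n/m+\OCC)$. Your explicit choices $z=\sqrt n$, $s=(n\delta)^{1/3}$ and the direct no‑Chernoff error argument are exactly the simplifications the paper alludes to (the paper phrases it as plugging $\rho_q=\rho_u=0$ into the general bound and notes ``the probabilistic analysis can also be simplified''); one tiny slip is that for a distance‑0 location the fingerprints agree deterministically, so fingerprint collisions are only relevant to false positives, not to the completeness direction, but this does not affect the conclusion.
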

\begin{proof}
We proceed as in the proof of \cref{thm:sublin}, using specific constants for $c$ and $\eps$ (e.g., $c=2$ and $\eps=1/3$) and setting $k=\frac{\delta m}{(1+\eps)c}=\Theta(\delta m)$.

We no longer need \cref{lem:ann} (approximate nearest neighbor
search).  We can use standard hashing and one-dimensional range search to find all $i\in [a,b)$ such that
$\xx_{u_i}$ has distance 0 from (i.e., is identical to) $\yy_{v_i}$, for each pair of  consecutive critical indices $a$ and $b$.  (The probabilistic analysis can also be simplified, with no Chernoff bounds needed.)  Effectively, we can set $\rho_q=\rho_u=0$ in the time bound, which becomes
$\tO(\sqrt{n} + \frac{n^{2/3}m^{1/3}}{k^{1/3}} + \frac nk +\OCC)$.  Putting $k=\Theta(\delta m)$ gives the theorem.
\end{proof}

The above implies a sublinear-time \emph{property tester\/} for pattern matching:
run the algorithm for $\tO\left(\delta^{-1/3}n^{2/3} + \delta^{-1} \frac{n}{m}\right)$
steps, and return ``true'' if the algorithm has not run to completion
or at least one location has been reported.
This way, if an exact match exists, then ``true'' is returned with high probability; and if the pattern is $\delta$-far (i.e., has Hamming distance more than $\delta m$) from the text at every location, then ``false'' is returned with probability at least a constant $>\frac 12$ (which can be amplified by repetition).

\begin{remark}
There has  been some past work on sublinear-time algorithms for string problems.  Chang and Lawler~\cite{ChangL94} considered the exact fixed-threshold problem and described an algorithm with expected time
$O(\frac{kn}{m}\log_\sigma m)$, which is sublinear when $k$ is small (and $m$ is not too small), but their work assumes a uniformly random text string.
Andoni et al.~\cite{AndoniIKH13} gave a sublinear-time algorithm for a \emph{shift-finding} problem that is closely related to the approximate $k$-mismatch problem (their algorithm similarly uses approximate nearest neighbor search as a subroutine), but
their work assumes that the pattern string is uniformly random and the
text is generated by adding random (Gaussian) noise to a shifted copy of the pattern.
By contrast, our results hold for \emph{worst-case} inputs.
Truly sublinear-time algorithms have been proposed for the problem
of approximating the edit distance between two strings, by
Batu et al.~\cite{BatuEKMRRS03} and Bar-Yossef et al.~\cite{Bar-YossefJKK04}, but with large (polynomial) approximation factors.
Bar-Yossef et al.~\cite{Bar-YossefJKK_APPROX04} studied the ``sketching
complexity'' of pattern matching and obtained sublinear bounds of the
form $\tO(\delta^{-1}\frac nm)$, but these do not correspond to actual
running times.

\end{remark}



\section{Streaming Algorithms}\label{sec:stream}

We now consider approximation algorithms in the streaming model.

\paragraph{Multi-stream dictionary matching.}
A useful building block for our algorithm is a subroutine for the multi-stream dictionary matching problem.
A \emph{dictionary} $D$ is a set of patterns of length at most $m$ each. 
In addition, there exist several streams representing different texts, and at each time step a new character arrives in one of the streams.
After the arrival of a character to the $i$th stream, the algorithm has to report the longest pattern from $D$ that matches a suffix of the $i$th text stream, or state that none of the patterns from $D$ is a suffix of the $i$th text stream.
We use the algorithm of Golan et al.~\cite{GKP18} for the multi-stream dictionary problem:
\begin{lemma}[{immediate from~\cite[{Theorem 2}]{GKP18}}]\label{lem:multistream_dictionary}
	There exists an algorithm for the multi-stream dictionary matching problem on a dictionary $D$, with $d$ patterns of length at most $m$ each, which for $t$ text streams costs $O(d\log m +t\log m \log d)$ words of space and $O(\log m+\log d\log\log d)$ time per character. Both these complexities are worst-case, and the algorithm is correct with high probability.
\end{lemma}

\subsection{Algorithm for \cref{prob:fixedk}}
In this section, we prove the following theorem.
\begin{theorem}\label{thm:streaming-prob2}
	There exists a streaming algorithm for \cref{prob:fixedk} where the pattern $P$ can be preprocessed in advance and the text arrives in a stream
	so that $\tilde d_{i-m+1}$ is reported as soon as $T[i]$ arrives.
	The space usage of the algorithm is $\tO(\min(\eps^{-2.5}\sqrt {k},\eps^{-2}\sqrt m))$ words, the running time per character is $\tO(\frac {\min(\eps^{-2.5}\sqrt k,\eps^{-2}\sqrt m)}{k}+\eps^{-2})$, and the outputs are correct with high probability.
\end{theorem}

We implement \cref{alg:main} in the streaming model with $z=\down{\sqrt p}$ (and therefore $\ceil{\frac pz}=\Theta(\sqrt{p})$) and $s=\Theta(1)$.
Recall that $L=\eps^{-2}\log s$ and $\hat p = \eps^{-1} s k \log m$.
For each $\ell\in[L]$, let $D^{(\ell)}=\{X_u^{(\ell)} : u\in[z]\}$ be a dictionary with $z$ strings.
For each $\ell\in[L]$ and $v\in [\ceil{p/z}]$, let $Y_v^{(\ell)}$ be a stream such that at time $i$ (after the arrival of $T[i]$)  \[Y_v^{(\ell)} \ = \bigodot_{j\le i\;:\; (j-vz) \bmod p \;\in\; B^{(\ell)} } T[j].\] Notice that after the arrival of $T[i]$, we have that $Y_v^{(\ell)}(i-m+1)$ is a suffix of $Y_v^{(\ell)}$.

\paragraph{Preprocessing phase.}
During the preprocessing phase, the algorithm chooses a random prime $p\in [\hat p , 2\hat p)$, and for each $\ell\in[L]$  the algorithm picks a random  sample $B^{(\ell)}\subseteq [p]$ with sampling rate $\beta=\frac 1{2k}$. For each $\ell \in [L]$, the algorithm (separately) applies the preprocessing of the multi-stream dictionary algorithm of \cref{lem:multistream_dictionary} on each $D^{(\ell)}$ so that the patterns from $D^{(\ell)}$ can be matched against the streams $Y_v^{(\ell)}$ for $v\in [\ceil {p/z}]$.

\paragraph{Processing phase.}
After the arrival of $T[i]$, the algorithm appends $T[i]$ into some of the streams $Y_v^{(\ell)}$.
More precisely, $T[i]$ should be inserted into $Y_v^{(\ell)}$ if and only if $(i-vz)\bmod p\in B^{(\ell)}$. 
A stream $Y_v^{(\ell)}$ is called \emph{active at time $i$} if and only if $(i-vz)\bmod p\in B^{(\ell)}$.
The following lemma is useful for efficiently retrieving the active streams at any time.

\begin{lemma}\label{lem:streams-scheduler}
There exists a data structure that at any time $i$ reports the streams active at time $i$. 
The space usage of the data structure is linear in the total number of streams, and the query time is linear in the output size (the number of active streams at time $i$) with high probability.
\end{lemma}

\begin{proof}
The data structure maintains one handle for each stream.
These handled are stored in a hash table that maps future time-points into linked lists of streams' handles. 
The algorithm preserves an invariant that at any time $i$, the handle of any stream $Y_v^{(\ell)}$ is stored in the linked list of the smallest $j\ge i$ such that $Y_v^{(\ell)}$ is active at time $j$ (i.e. $(j-vz)\bmod p\in B^{(\ell)}$).
The algorithm maintains each set $B^{(\ell)}$ as a cyclic linked list, and each stream handle $Y_v^{(\ell)}$ is maintained in the linked list of time $j$ with a pointer to the element $(j-vz)\bmod p$ in the cyclic linked list of $B^{(\ell)}$.

During an update (incrementing $i$ to $i+1$), the algorithm first reports all the streams in the linked list of time $i$.
Then, in order to keep the data-structure up-to-date and preserve the invariant, the algorithm computes for each stream $Y_v^{(\ell)}$ that is active at time $i$ the smallest $j>i$ such that $Y_v^{(\ell)}$  is active also at time $j$. 
This computation is done in constant time per stream by advancing the pointer to the cyclic linked list of $B^{(\ell)}$.
Then, the algorithm inserts the handle of $Y_v^{(\ell)}$ into the linked list of time $j$.
Finally, the algorithm removes the empty linked list of time $i$ from the hash table to reduce the space usage.
\end{proof}

Using the data structure of \cref{lem:streams-scheduler}, the algorithm retrieves all the active streams and passes $T[i]$ into each of those streams.
After processing $T[i]$, the dictionary matching algorithm of \cref{lem:multistream_dictionary} identifies for each stream $Y_v^{(\ell)}$ the current longest suffix that matches a pattern in $D^{(\ell)}$.
The algorithm maintains a pointer $\pi_v^{(\ell)}$ to the longest pattern from $D^{(\ell)}$ that is a current suffix of $Y_v^{(\ell)}$, if such a pattern exists. Maintaining these pointers costs constant space per stream, and the overall time cost per text character is linear in the number of currently active streams. 

\paragraph{Evaluating $\tilde d_{i-m+1}$.}
After updating all of the active streams the algorithm estimates $d_{i-m+1}$ by applying Lines~\ref{ln:query-generic}--\ref{ln:output} from~\cref{alg:main}.
In order to test whether $X_{u_{i-m+1}}^{(\ell)} = Y_{v_{i-m+1}}^{(\ell)}(i-m+1)$ the algorithm checks if $X_{u_{i-m+1}}^{(\ell)}$ is a suffix of the pattern pointed to by $\pi_{v_{i-m+1}}^{(\ell)}$.\footnote{The test costs constant time using standard techniques.}

\paragraph{Complexity analysis.}
For each $\ell\in[L]$, the dictionary $D^{(\ell)}$ contains $O(\sqrt p)$ patterns of length $O(m)$. Moreover, the number of streams $Y_v^{(\ell)}$ is also $O(\sqrt p)$. Thus, the algorithm of \cref{lem:multistream_dictionary} uses $O(\sqrt p\log m+\sqrt p\log m\log d)=\tilde O(\sqrt p)=\tilde O(\min(\sqrt{\eps^{-1} k},\sqrt m))$ space.
Summing over all $\ell\in [L]$, the space usage of all the streams is $O(|L|\cdot \sqrt{p})=\tilde O(\min(\eps^{-2.5}\sqrt {k},\eps^{-2}\sqrt m))$ words of space. Since the auxiliary data structure of \cref{lem:streams-scheduler} takes linear space in the number of streams, the total space usage of the algorithm is $\tO(\min(\eps^{-2.5}\sqrt {k},\eps^{-2}\sqrt m))$.

As for the running time, we first bound the number of active streams at any time. For any stream $Y_v^{(\ell)}$ and any time $i$, the stream is active at time $i$ if and only if $(i-vz)\bmod p\in B^{(\ell)}$, which happens with probability $\beta$ independently across all the streams. 
By standard Chernoff bounds, the number of active streams at time $i$ is $\tO(\beta \cdot L\cdot \floor{p/z})
=\tO(1+\frac 1k\sqrt {\min (\eps^{-1}k,m)}\eps^{-2})=
\tO(1+\frac {\min(\eps^{-2.5}\sqrt k,\eps^{-2}\sqrt m)}{k})$
with high probability.
By a union bound over all the indices, we have that with high probability the number of active streams is $\tO(1+\frac {\min(\eps^{-2.5}\sqrt k,\eps^{-2}\sqrt m)}{k})$ at all times.
For each active stream, the processing of $T[i]$ costs $\tilde O(1)$ time (due to \cref{lem:multistream_dictionary}). 
Moreover, the time cost for updating the data structure of \cref{lem:streams-scheduler} is linear in the number of active streams.
The algorithm spends $\tilde O(1)$ time for each $\ell\in[L]$ to compute $c_{i-m+1}^{(\ell)}$, summing up to a total of $\tilde O(\eps^{-2})$ time.
Therefore, with high probability, the time cost pre character is $\tilde O(\frac {\min(\eps^{-2.5}\sqrt k,\eps^{-2}\sqrt m)}{k}+\eps^{-2})$.

By \cref{lem:generic_algo_correct}, $\hat d_{i-m+1}$ is an $(\eps,k)$-estimation of $d_{i-m+1}$ with large constant probability for each index $i$.
In order to amplify the correctness probability, $\O(\log n)=\tilde O(1)$ instances of the described algorithm are run in parallel, and using the standard median of means technique, the correctness probability becomes $1-n^{-\Omega(1)}$ with just an $O(\log n)$ multiplicative overhead in the complexities.
Hence, \cref{thm:streaming-prob2} follows.

\subsection{More General Problems}

We consider the following generalization of \cref{prob:apx}.


\defproblem{prob:apx-k}{Approximate Text-To-Pattern Hamming Distances with a Fixed Threshold}{A pattern $P\in \Sigma^m$, a text $T\in \Sigma^n$, a distance threshold $k\le m$, and an error parameter $\eps\in(0,\frac13]$.}{For every $i\in [n-m+1]$, a value $\DD_i$  that is an $(\eps,k')$-estimation  of $d_i$ for all $k'\le k$.
}

Notice that \cref{prob:apx} is a special case of \cref{prob:apx-k} with $k=m$.
The solution for \cref{prob:apx-k} is based on the solution for \cref{prob:fixedk}; the reduction is  similar to the reduction described in \cref{sec:preliminaries}. The only difference is that we use only thresholds which are powers of 2 up to $k$ (instead of powers of 2 up to $m$).

An additional speedup is obtained as follows: to cover all values of $k'$ that are smaller than $\eps^{-1}$ we use the exact algorithm of~\cite{CKP19} which takes $\tO(\sqrt {\eps^{-1}})$ time and uses $\tO(\eps^{-1})$ words of space.
Thus, the running time of the algorithm becomes 
\[\tO\left(\sqrt {\eps^{-1}}+\sum_{ \substack{\eps^{-1}\le k'\le k \\  k' \text{ is a power of } 2 }}\frac {\min(\eps^{-2.5}\sqrt {k'},\eps^{-2}\sqrt m)}{k'}+\eps^{-2}\right)\,=\,\tO({\eps^{-0.5}}+\eps^{-2})\,=\,\tO(\eps^{-2}).\]
The space usage of the algorithm is 
\[\tO\left(\eps^{-1}+\sum_{ \substack{\eps^{-1}\le k'\le k \\  k' \text{ is a power of } 2 }}\min\left(\eps^{-2.5}\sqrt {k'},\eps^{-2}\sqrt m\right)\right)\,=\,\tO(\min(\eps^{-2.5}\sqrt k,\eps^{-2}\sqrt m)).\]
The following result follows.

\begin{theorem}\label{thm:streaming-kmm}
	There exists a streaming algorithm for \cref{prob:apx-k} using $\tO(\min(\eps^{-2.5}\sqrt {k},\eps^{-2}\sqrt m))$ words of space and costing $\tO(\eps^{-2})$ time per character.
	For every $i\in [n]\setminus[m-1]$, after the arrival of $T[i]$, the algorithm reports $\tilde d_{i-m+1}$ which with high probability is an $(\eps,k')$-estimation of $d_i$ for all $k'\le k$.
\end{theorem}

In \cref{sec:streamingB}, we introduce another streaming algorithm which uses a different sampling method in order to improve the $\eps$-dependence in space usage of the algorithm at the cost of degraded $\eps$-dependence in the running time of the algorithm.

\bibliographystyle{plainurl}
\bibliography{Refs}

\appendix
\part*{Appendix}

\section{Speed-Ups via Bit Packing}\label{app:lin}

In this appendix, we use fancier bit-packing and table-lookup
tricks in order to further reduce the running time of our offline approximation algorithm in \cref{sec:combo}.
In this setting, we assume that the input strings are \emph{packed}, with each character stored in $\ceil{\log \sigma}$ bits,
so that the input strings take $\Oh(\frac{n \log \sigma }{\log n})$ space only.
Our ultimate goal is to improve the running time for \cref{prob:apx} to $\Oh(\frac{n\log \sigma}{\log n} + \frac{n \log^2 \log n}{\eps^2 \log n})$,
a speedup by essentially a $\Theta(\frac{\log n}{\log \log n})$ factor compared to the running time $\Oh(\eps^{-2} n \log \log n)$
in \cref{cor:combo}.
Note that the $\Theta(\frac{n\log \sigma}{\log n})$ term is necessary because our algorithm in particular locates the exact occurrences of $P$ in $T$,
and this requires reading all the characters of $P$ and $T$ in the worst case.
\subsection{Algorithm of \cref{sec:offline}}

The first part is to speed up the query cost
of the main algorithm in \cref{thm:main} by applying
bit packing to the query array $Q$.
By dividing $[n]$ into blocks of size $b$, we can
encode $Q$ in $\log b$ bits per element, plus one word per block,
for a total of $O(\frac{n}{b} + |Q|\frac{\log b}{\log n})$ words (with word size $w=\Theta(\log n)$).
The output can be encoded in $O(|Q|\frac{\log \log_{1+\eps} m}{\log n})$
words: we report $\lfloor{\log_{1+\eps} \DD_i}\rfloor$ instead of $\DD_i$ for each location $i$, which corresponds to rounding $\DD_i$ down to the nearest power of $(1+\eps)$.\footnote{
	This rounding increases the approximation ratio by a factor $1\pm \eps$. Hence, the algorithm needs to be called with $\teps = \Theta(\eps)$ instead of $\eps$ in order to match the original guarantees. Below, we ignore this technical issue.
}

\begin{theorem}\label{thm:main2}
	For every $s= n^{O(1)}$ and $b$ such that $\log n \le b=n^{o(1)}$,
	\cref{prob:fixedk} can be solved in
	\[O\left(\sqrt{\tfrac{ s mn \log m}{\eps^5 k}}\log s \,+\, \tfrac{n \log s}{\eps^2 k} \,+\,\tfrac{n}{\eps^2 b} \,+\, |Q|\tfrac{\log b}{\eps^2 \log n}\right)\]
	time using a randomized algorithm whose error probability for each fixed $i\in Q$ is $\Oh(1/s)$.
	\end{theorem}

\begin{proof}
We modify the algorithm in \cref{thm:main} starting from the version with the fingerprint functions $F^{(\ell)}$ are combined
with the hash functions $h^{(\ell)}$ resulting in binary values $x_u^{(\ell)}$ and $y_v^{(\ell)}(i)$.

We speed up Lines~\ref{ln:query-generic}--\ref{ln:output} of the algorithm as follows.
Divide $[L]$ into $O(\tfrac{1}{\delta\eps^2})$ intervals of length $L_0=\delta\log n$ for a sufficiently small constant $\delta > 0$.  Fix one such interval $\Lambda$.

Divide $[n]$ into blocks $I$ of size at most $b$
such that the index $v_i$ and the strings $Y_{v_i}^{(\ell)}(i)$ for all $\ell \in \Lambda$ stay unchanged across all indices $i\in I$ in every fixed block.
With probability $1-\Oh(1/s)$, the number of blocks is \[O\left(\tfrac{n}{b}+\tfrac{n}{z}+\beta \tfrac{n p}{z}L_0\right)\,=\,O\left(\tfrac{n}{b}+\sqrt{\tfrac{snm\log m}{\eps k}}\log s + \tfrac{n\log s}{k}\right).\]
We create the following function:
\begin{quote}
\emph{Input}:
the bit vector $\langle y_{v_i}^{(\ell)}(i):\ell\in\Lambda\rangle$
common to indices $i\in I$,
a subset $Q'\sub Q\cap I$ of size at most $\tfrac{\delta\log n}{\log b}$ (with indices stored relative to $\min Q'$),
and the number $u_{\min Q'}$.

\emph{Output}: the counts $|\{\ell\in\Lambda: x_{u_i}^{(\ell)}\neq y_{v_i}^{(\ell)}(i)\}|$ for all $i\in Q'$.
\end{quote}

Note that we do not need the actual index $\min Q'$.  There is enough information in the input to deduce the output,
since $u_i=u_{\min Q'}+i-\min Q'$ holds for all $i\in Q'$.

We bound the input/output size of the function:
The bit vector requires $O(L_0)=O(\delta\log n)$ bits.
The subset $Q'$ requires $O(\tfrac{\delta\log n}{\log b} \cdot \log b) = O(\delta\log n)$ bits.
The number $u_{\min Q'}$ belongs to $[z]$.
The counts require $O(\tfrac{\delta\log n}{\log b}\cdot (\log b + \log L_0)) = O(\delta\log n)$ bits.

For every possible input, the algorithm precomputes the output in $O(bL_0)=O(b\log n)$ time each and stores it in a table.  Since the
number of possible inputs is bounded by $zn^{O(\delta)}$,
the table size and precomputation time are $O(bzn^{O(\delta)})= O(bn^{1/2+O(\delta)})=n^{1-\Omega(1)}=o(\frac{n}{b})$.

For indices $i\in Q$,
the algorithm computes the counts $|\{\ell\in\Lambda: x_{u_i}^{(\ell)}\neq y_{v_i}^{(\ell)}(i)\}|$
using $O(\tfrac{n}{b}+\sqrt{\tfrac{snm\log m}{\eps k}}\log s + \tfrac{n\log s}{k})$ calls to the above function, plus the same number of word operations.

The analysis above is for a fixed interval $\Lambda$. The algorithm performs the same procedure
for all $O(\eps^{-2})$ intervals $\Lambda$, and thus the time bound gets multiplied by an $O(\eps^{-2})$ factor. 
Note that the packed lists of counts can be combined in time linear in the number of words.
\end{proof}

\subsection{Algorithms of \cref{sec:small}}

Next, we speed up the algorithm in \cref{thm:small} by
applying bit packing to the input text string, when the alphabet size $\sigma$ is small.  The text requires $O(n\log\sigma)$ bits and is assumed to be stored in $O(\tfrac{n \log\sigma}{\log n})$ words.

\begin{theorem}\label{thm:small2}
	For every $s= n^{O(1)}$, 
	\cref{prob:fixedk} can be solved in time
	\[O\left(\tfrac{m^2\log s  \,+\, n \log(\sigma\log n)}{\eps^2 \log n}\right),\]
	using a randomized algorithm whose error probability for each fixed $i\in Q$ is $\Oh(1/s)$.
	\end{theorem}	
\begin{proof}
We modify the algorithm in \cref{thm:small} in order to speed up Lines~\ref{ln:query-generic}--\ref{ln:output} without explicitly needing \cref{ln:constructY}. This is achieved as follows. Divide $[L]$ into $O(\tfrac1{\delta\eps^2})$
intervals of length $L_0=\delta\log n$ for a sufficiently small constant $\delta>0$.

Fix one such interval $\Lambda$.
Let $b=\tfrac{\delta\log n}{\log(\sigma\log n)}$. Create the following function:
\begin{quote}
\emph{Input}:
a bit vector $\langle y_{i_0}^{(\ell)}:\ell\in\Lambda\rangle$, the number $i_0\bmod m$, and substrings $T[i_0\dd i_0+b-1]$ and $T[i_0+m\dd i_0+m+b-1]$.

\emph{Output}: the counts $|\{\ell\in\Lambda: x_{i\bmod m}^{(\ell)}\neq y_i^{(\ell)}\}|$ for all $i\in [i_0,i_0+b)$, and
the new bit vector $\langle y_{i_0+b}^{(\ell)}:\ell\in\Lambda\rangle$.
\end{quote}

Note that we are not given the actual index $i_0$; knowing
$i_0\bmod m$ is sufficient.  There is enough information in the input to deduce the output.  From the bit vector
$\langle y_{0}^{(\ell)}(i_0):\ell\in\Lambda\rangle$
and the substrings $T[i_0\dd i_0+b-1]$ and $T[i_0+m\dd i_0+m+b-1]$, we can determine
the bit vectors $\langle y_{0}^{(\ell)}(i):\ell\in\Lambda\rangle$ for all $i\in [i_0,i_0+b]$.

We bound the input and output size of the function:
The bit vectors require $O(\delta\log n)$ bits.
The substrings require $O(b\log\sigma)=O(\delta\log n)$ bits.  The number $(i_0\bmod m)$ belongs to $[m]$.
The output counts require $O(b\log L_0)=O(\delta\log n)$ bits.

For every possible input, the algorithm precomputes the output in $O(bL_0)=O(\log^{O(1)}n)$ time each,
and stores it in a table.
Since the number of different inputs
is bounded by $O(mn^{O(\delta)})$,
the table size and precomputation time is $O(mn^{O(\delta)})$.

For all $i\in [n-m+1]$ (we may as well take $Q=[n-m+1]$),
the algorithm computes the counts $|\{\ell\in\Lambda: x_{u_i}^{(\ell)}\neq y_{v_i}^{(\ell)}(i)\}|$
using $O(\tfrac{n}{b})=O(\tfrac{n \log(\sigma\log n)}{\log n})$ calls
to the above function.

The analysis above is for a fixed interval $\Lambda$. The algorithm performs the same procedure
for all $O(\eps^{-2})$ intervals $\Lambda$, and thus the time bound gets multiplied by an $O(\eps^{-2})$ factor. 
\end{proof}

\begin{theorem}\label{thm:filter2}
	For every constant $\delta > 0$, there is a randomized algorithm for \cref{prob:fixedk} with $k\ge \eps^{-1}m^\delta$ that runs in $\Oh(\frac{n\log \sigma + n\log \log n}{\eps^2 \log n})$ time and is correct with high probability.
\end{theorem}
\begin{proof}
If $m \le \log^{2/\delta} n$, we run the algorithm in \cref{thm:small2} to solve \cref{prob:fixedk}
in time
\[\Oh(\tfrac{n \log(\sigma \log n)}{\eps^2 \log n})=\Oh(\tfrac{n\log \sigma + n\log \log n}{\eps^2 \log n}).\]
Otherwise, we run the algorithm of \cref{thm:main2} with $s=n^{\delta/2}$, $b=\log^2 n$, and $Q=[n]$ to solve \cref{prob:fixedk} in time
\begin{align*}& \Oh\left(b\sqrt{\tfrac{snm \log m}{\eps^5 k}}\log s + \tfrac{n \log s}{\eps^2 k}+ \tfrac{n}{\eps^2 b}+\tfrac{n\log b}{\eps^2\log n}\right)\\
     & =\, \Oh\left(\eps^{-2}\sqrt{n^{1+\delta/2}m^{1-\delta}}\log^{1.5} n + \tfrac{n\log n}{\eps m^\delta}+ \tfrac{n}{\eps^2 \log^2 n} + \tfrac{n \log \log n}{\eps^2 \log n}\right) \\
     & =\,  \Oh\left(\eps^{-2}n^{1-\delta/4+o(1)} + \tfrac{n}{\eps \log n} + \tfrac{n}{\eps^2 \log^2 n} + \tfrac{n \log \log n}{\eps^2 \log n}\right) \,=\, \Oh( \tfrac{n \log \log n}{\eps^2 \log n}).
\end{align*}
The whole algorithm is repeated $\Oh(1)$ times to lower the error probability.
\end{proof}

\subsection{Algorithm of \cref{sec:exact}}

Next, we describe bit-packed variants of \cref{thm:periodic,thm:exact}:

\begin{theorem}\label{thm:periodic2}
	Given an integer $\rho=O(d)$ which is a $d$-period of $P$ and $T$, 
	\cref{prob:exact}, with $\down{\log_{1+\eps} d_i}$ reported instead of $d_i$,
	can be solved in $\Oh(d^2+\frac{n\log \sigma + n\log w \log \log_{1+\eps} k}{w})$ time using a randomized algorithm that returns correct answers with high probability, after $\Oh(k^2 + 2^{O(w)})$-time preprocessing depending only on the threshold $k$ and the machine word size $w$.
\end{theorem}
\begin{proof}
	First, note that the functions $\Delta_{\rho}[T_a]$ and $\Delta_{\rho}[P_a]$ can be constructed in $\Oh(d+\frac{n\log \sigma }{w})$ time
	by iterating through the mismatches between positions at distance $\rho$ since the total support size of $(\Delta_{\rho}[P_a])_{a\in \Sigma}$ and $(\Delta_{\rho}[T_a])_{a\in \Sigma}$ is $\Oh(d)$.
	The algorithm constructs $\Delta^2_\rho[T \otimes P]$ in $\Oh(d^2)$ time using \cref{lem:offline-convolution}.
	Due to \cref{lem:crosscorrelationHam}, these values can be easily transformed into $\Delta^2_\rho[D]$, where $D(i) = d_i$ for $i\in [n-m+1]$
	and $D(i)=0$ otherwise. In particular, $\Delta^2_\rho[D]$ has $\Oh(d^2)$ non-zero entries which can be computed in $\Oh(d^2)$ time.

	The algorithm first determines the values $\down{\log_{1+\eps} d_i}$ for indices $i$ grouped by $i \bmod \rho$. It then interleaves these $\rho$
	subsequences into a single output sequence.
	Observe that if $\Delta^2_\rho[D](i+2\rho)=\cdots =  \Delta^2_\rho[D](i+\ell\rho) = 0$, then $D(i), D(i+\rho),\ldots, D(i+\ell\rho)$ forms an arithmetic progression. Hence, the subsequent values $d_i$ with fixed $r=i\bmod \rho$
	can be decomposed into arithmetic progressions. The total number of these progressions across $r\in [\rho]$ is $\Oh(d^2)$. 
	This bound increases at most twofold if the values are capped with $k+1$, i.e., if $\max(d_i,k+1)$ is represented instead of $d_i$.

	In the preprocessing, the algorithm creates a sequence $S$ of length $O(k^2)$ containing (as contiguous subsequences) all non-constant arithmetic progressions with integer values between $0$ and $k+1$.
	Each entry $s_i$ is then replaced with $\down{\log_{1+\eps} s_i}$, and the latter values is packed in $\Oh(\log \log_{1+\eps }k)$ bits.
	This step takes $\Oh(k^2)$ preprocessing time.
	As a result, for every length-$\ell$ arithmetic sequence in $d_i$, we can copy the sequence of values $\down{\log_{1+\eps} d_i}$ in $\Oh\big(1+\frac{\ell \log  \log_{1+\eps} k}{w}\big)$ time from the appropriate part of the sequence $\down{\log_{1+\eps} s_i}$. The total processing time is $O\big(d^2 +\frac{n\log \log_{1+\eps} k}{w}\big)$.

	Finally, the algorithm needs to interleave the values $\down{\log_{1+\eps} d_i}$ across distinct remainders modulo $\rho$.
	This task is equivalent to transposing an $\ceil{n / \rho}\times \rho$ matrix (with $\Oh(\log \log_{1+\eps} k)$-bit entries) into an $\rho \times \ceil{n/\rho}$ matrix,	and thus it takes $\Oh(\frac{n\log w \log \log_{1+\eps} k}{w})$ time~\cite{DBLP:journals/jal/Thorup02}.
\end{proof}

\begin{theorem}\label{thm:exact2}
	\cref{prob:exact}, with $\down{\log_{1+\eps} d_i}$ reported instead of $d_i$,
	can be solved in $\Oh(\frac{nk^{2/3}}{m^{1/3}}+\frac{nk^2}{m}+\frac{n\log \sigma + \log \log n \log \log_{1+\eps} k}{\log n})$ time using a randomized algorithm that returns correct answers with high probability.
\end{theorem}
\begin{proof}
	We assume that $k \le \sqrt{m}$; otherwise, the $\Oh(n+\tfrac{nk^2}{m})$ running time of \cref{thm:exact} is already good enough.
	First, the algorithm uses \cref{thm:filter} with $\eps = \frac13$ and ${\tilde k} = (m k)^{1/3}$, which results in a sequence $\DD_i$ satisfying the following two properties
	with high probability: if $\DD_i > \frac43{\tilde k}$, then $d_i > {\tilde k}\ge k$; if $\DD_i \le \frac43{\tilde k}$, then $d_i \le 2{\tilde k}$.
	
	Let $C = \{i \in [n-m+1]: \tilde{d}_i \le \frac43 {\tilde k}\}$. 	
	We consider two cases depending on whether $C$ contains two distinct positions at distance $\rho \le \frac12{\tilde k}$ from each other.
	If $C$ does not contain such two positions, then $|C|=\Oh(\frac{n}{{\tilde k}})$, and the algorithm spends $\Oh(k)$ time for each $i\in C$
	to compute $\min(d_i,k+1)$ using $k+1$ \emph{Longest Common Extension} (LCE) queries, which can be answered in $\Oh(1)$ time after $\Oh(\frac{n\log \sigma}{\log n})$-time preprocessing~\cite{DBLP:conf/stoc/KempaK19}. 
	In this case, the overall running time is therefore $\Oh(\frac{nk}{{\tilde k}}+\frac{n\log\sigma}{\log n}) = \Oh(\frac{nk^{2/3}}{m^{1/3}}+\frac{n\log\sigma}{\log n})$.
	
	It remains to consider the case where $C$ contains two distinct positions at distance $\rho\le \frac12{\tilde k}$ from each other.
	In this case, we claim that the running time is $\Oh(\frac{nk^{2/3}}{m^{1/3}}+\frac{n\log \sigma + \log w \log \log_{1+\eps} k}{w})$
	after preprocessing in time $\Oh(2^{\Oh(w)})$, where $w$ is the machine word size. We set $w=\delta \log n$ for a sufficiently small constant $\delta$
	so that the preprocessing time does not exceed $\Oh(\frac{n}{\log n})$.
 
	This claimed running time is proportional to $n$, so we can assume without loss of generality that $n\le \frac32 m$; otherwise, the text $T$ can be decomposed into parts of length at most $\frac32 m$ with overlaps of length $m-1$.  
	We also assume without loss of generality that $\min C = 0$ and $\max C = n-m$;
 	 otherwise, $T$ can be replaced with $T[\min C \dd \max C + m-1]$ and all indices $i$ with $d_i \le k$ are preserved (up to a shift by $\min C$).

	Now, repeating the argument in \cref{thm:exact}, we conclude that $\rho$ is an $\Oh(\tilde k)$-period of both $P$ and $T$.
	Hence, we can use \cref{thm:periodic2}, whose running time is as promised:
	\[\Oh\left({\tilde k}^2+ \tfrac{n\log \sigma + n\log w \log \log_{1+\eps} k}{w}\right)\,=\,\Oh\left(\tfrac{n k^{2/3}}{m^{1/3}}+\tfrac{n \log \sigma + \log w \log \log_{1+\eps} k}{w}\right).\]
	The additional $\Oh(k^2)$ preprocessing time is dominated by the $\Oh({\tilde k}^2)$ term.
	\end{proof}

\subsection{Algorithm of \cref{sec:combo}}

Putting everything together, we get the final algorithm with
slightly sublinear running time when $\sigma$ is small; in particular, the time bound is at least as good as $O(\eps^{-2}n)$ (in fact, $O(n + \tfrac{n\log^2\log n}{\eps^2\log n})$) for any alphabet size~$\sigma$.

\begin{corollary}

There is a randomized algorithm for \cref{prob:apx} that runs in
$O(\tfrac{n\log \sigma}{\log n} + \tfrac{n\log^2 \log n}{\eps^2 \log n})$ time and is correct with high probability.
\end{corollary}
\begin{proof}
Like in the proof of \cref{cor:combo}, we consider three cases.
\begin{itemize}
\item
Case I: $m\le \log^{9} n$.  We run our algorithm in \cref{thm:small2} to solve \cref{prob:fixedk} in time
$O(\tfrac{n\log \sigma + n\log \log n}{\eps^2 \log n})
=O(\tfrac{n\log\log n}{\eps^2 \log n})$. We can solve \cref{prob:apx} by
examining all $k\le m$ that are powers of 2,
in $O(\tfrac{n\log\log n}{\eps^2 \log n} \log m)= O(\tfrac{n\log^2 \log n}{\eps^2 \log n})$ time.
\item
Case II: distances $d_i \le \eps^{-1} m^{1/3}$ and $m > \log^{9}n$.  We use the exact algorithm in \cref{thm:exact2},
which can compute all such distances in time
\begin{multline*}\Oh(\tfrac{nk^{2/3}}{m^{1/3}}+\tfrac{nk^2}{m}+\tfrac{n\log \sigma + n\log \log_{1+\eps} k \log \log n}{\log n})
=\Oh(\tfrac{n}{\eps^{2/3} m^{1/9}}+\tfrac{n}{\eps^2 m^{1/3}}+\tfrac{n\log \sigma}{\log n} + \tfrac{n \log^2 \log n}{\eps \log n})
\\=O(\tfrac{n\log \sigma}{\log n} + \tfrac{n\log^2 \log n}{\eps^2 \log n}).\end{multline*}
\item

Case III:  distances $d_i >  \eps^{-1}m^{1/3}$ and $m > \log^{9}n$.  We run our
algorithm in \cref{thm:main2} with $ s =n^{1/6}$ and $b=\log^2n$
to solve \cref{prob:fixedk}
in time $O(\sqrt{\tfrac{ s nm \log m}{\eps^5 k}}\log s + \tfrac{n\log s}{\eps^2 k} +\tfrac{n}{\eps^2 b} + |Q|\tfrac{\log \log n}{\eps^2 \log n})$.

To solve \cref{prob:apx}, we consider an auxiliary problem where an interval $I$ is additionally given with a guarantee that $d_i\in I$ for every $i\in Q$.
We select $k$ as approximately the geometric mean of the endpoints of $I$ and run the algorithm of \cref{thm:main2},
which lets us split $Q$ into three sublists: one for indices $i$ for which $\DD_i \in [(1-\eps)k, 2(1+\eps)k]$ are guaranteed to be a good approximation of $d_i$,
one for which $d_i<k$ is guaranteed, and one for which $d_i > 2k$ is guaranteed. The latter two sublists are processed recursively with intervals $I\cap [0,k)$
and $I\cap (2k,m]$, respectively.
Finally, we merge the output from the three sublists.
Note that we can split the query list for $Q$ into $3$ sublists in $O(\tfrac{n}{b} + |Q|\tfrac{\log \log n}{\log n})$ time, i.e., in time linear in the number of machine words.
Similarly, we can merge the output query sublists within $\Oh(\frac{n}{b}+|Q|\frac{\log \log_{1+\eps} n}{\log n})=\Oh(\frac{n}{b}+|Q|\frac{\log \log  n}{\eps \log n})$ time. These running times are dominated by the bound from \cref{thm:main2}.

Initially, the interval $I$ is set as $(\eps^{-1}m^{1/3}, m]$ since distances $d_i \le \eps^{-1} m^{1/3}$ have already been computed in the previous case.
In total, the algorithm of \cref{thm:main2} is called once for each power of two $k$, $\eps^{-1}m^{1/3} \le k \le m$, and the total number of query locations across all these instances
is $O(|Q|\log \log n)$ since the depth of the recursion is bounded by $\log \log n$. The total time is therefore
\begin{multline*}O\left(\sum_{\substack{k>\eps^{-1} m^{1/3} \\ k \text{ is a power of } 2}}\sqrt{\tfrac{ s n m \log m}{\eps^5 k}}\log s + \sum_{\substack{k>\eps^{-1} m^{1/3} \\ k \text{ is a power of } 2}} \tfrac{n\log s}{\eps^2 k} + \tfrac{n\log m}{\eps^2 b} + |Q|\tfrac{\log^2 \log n}{\eps^2 \log n}\right) \\ = O\left(\eps^{-2}\sqrt{n^{7/6}m^{2/3}}\log^{1.5}n + \tfrac{n\log n}{\eps m^{1/3}} + \tfrac{n}{\eps^2 \log n} + |Q|\tfrac{\log^2 \log n}{\eps^2 \log n}\right)=\Oh\left(\tfrac{n\log^2 \log n}{\eps^2 \log n}\right).\end{multline*}
We repeat the algorithm $O(1)$ times to lower the error probability. \qedhere
\end{itemize}
\end{proof}

\section{Alternative Streaming Algorithm}
In this section, we present a different sampling method which allows us to  consider fewer offset patterns and offset texts compared to the approach presented in \cref{sec:generic}. The main idea is to pick a random set of offset patterns (each sampled independently with rate $\beta_P$) and a random set of offset texts (each sampled independently with rate $\beta_T$).
The algorithm has access to the sampled offset patterns and the sampled offset texts, so for every location $i$ the algorithm is able to compute the number
of sampled offset patterns aligned against sampled offset texts that yield a mismatch under this alignment.
Notice that for every location $i$, the set of sampled offset patterns that are aligned against sampled offset texts in a fixed alignment forms a random set of offset patterns with sampling rate $\beta_P \cdot \beta_T$.
Hence, the number of such offset patterns that  mismatch the corresponding offset texts yields a good approximation of the value $d'_i$, which is also a good approximation for $d_i$ by \cref{lem:X}.

\begin{algorithm}[H]
	Pick a random prime $p\in [\hat p,2 \hat p)$\tcr*{$\hat{p} = \eps^{-1} s k \log m$}\label{ln:pick:alt}
	$p := \min(p,m)$\;\label{ln:m:alt}
		Pick two random samples $B_P, B_T\subseteq [p]$ with sampling rates $\beta_P$ and $\beta_T$\label{ln:sample:alt}\;
		\lForEach{$b\in B_P$} {
			$\displaystyle X_b\: =\: \bigodot_{j\in [m]:\ j \bmod p\ =\ b} P[j]$\label{ln:constructX:alt}%
		}
		\ForEach{$b \in B_T$}{
			\lForEach{$i \in [n-m+1]$} {
				$\displaystyle Y_v(i) \:=\: \bigodot_{j\in [m]:\ (i+j) \bmod p\ =\ b} T[i+j]$\label{ln:constructY:alt}%
			}
		}
	\ForEach{$i \in Q$}{\label{ln:query}
		$B_i = \{b \in B_P : (b-i) \bmod p \in B_T\}$\;
		Set $c_i=|\{b\in B_i: X_{b}\neq Y_{(b-i)\bmod p}(i)\}|$ and $\DD_i = \frac{c_i}{\beta_P \beta _T}$\;\label{ln:output:alt}
	}
	\caption{Alternative-Algorithm($T,P,Q,k,\eps,s$)}\label{alg:streamingB}
\end{algorithm}

\begin{lemma}\label{lem:alternative}
If $\beta_P \cdot \beta_T > c \cdot \tfrac{\log s}{\eps^2 k}$ for a sufficiently large constant $c$, then for every $i\in Q$
the value $\DD_i$ computed by \cref{alg:streamingB} is an $(\eps,k)$-estimation of $d_i$ with probability $1-\Oh(1/s)$. 
\end{lemma}
\begin{proof} 
	
	Observe that $B_i$ is a subset of $[p]$ with elements sampled independently with probability $\beta_P \beta_T$.
	Moreover, note that $c_i = |B_i \cap M'_i|$, where $M'_i = M_i \bmod p$. 
	Hence, $\E[c_i] = \beta_P\beta_T d'_i$, $\E[\DD_i]=d'_i$, and the symmetric multiplicative Chernoff bound	for every $\teps = \Theta(\eps)$ yields
	\[\Pr[\DD_i \in (1\pm \teps)d'_i] = 1 - \exp(-\Omega(\teps^2 \beta_P \beta_T d'_i))=1-\exp\left(-\Omega\left(\tfrac{c\cdot d'_i \log s}{k}\right)\right)\]
	 We consider three cases.
	\paragraph*{Case 1: $d_i \in [\tfrac12k, 4k]$.}
	By \cref{lem:X}, $(1-\teps)d_i \le |M_i \bmod p| \le d_i$ holds for every $\teps = \Theta(\eps)$ with probability $1-O(1/s)$
	for the prime $p$ picked in \cref{ln:pick:alt}, and obviously this is still true if $p$ is replaced with $m$ in \cref{ln:m:alt}.
	The following argument is conditioned on that event. In other words, we assume that $(1-\teps)d_i \le d'_i \le d_i$.
	In particular, this yields $d'_i = \Theta(k)$, so the Chernoff bound implies
\[\Pr[\DD_i \in (1\pm \teps)d'_i] = 1-\exp\left(-\Omega\left(\tfrac{c\cdot d'_i \log s}{k}\right)\right) \ge 1- \tfrac{1}{s}\]
	provided that the constant $c$ is large enough.
	Hence, $\DD_i = (1\pm \teps)d'_i = (1\pm \Oh(\teps))d_i$ also holds with probability $1-\Oh(1/s)$. This remains true even if we account for the fact that $(1-\teps)d_i \le d'_i \le d_i$ may fail to be satisfied with probability $O(1/s)$.
	Taking $\teps = \Theta(\eps)$ with a sufficiently small constant factor, we conclude that $(1-\eps)d_i \le \DD_i \le (1+\eps)d_i$ holds in this case with probability $1-\Oh(1/s)$. In particular, $d_i < k$ if $\DD_i < (1-\eps)k$ and $d_i > 2k$ if $\DD_i > 2(1+\eps)k$, so $\DD_i$ is an $(\eps,k)$-estimation of $d_i$.

	\paragraph*{Case 2: $d_i < \tfrac12 k$.} In this case, $d'_i \le d_i < \tfrac12k$, so the Chernoff bound implies 
	\[\Pr[\DD_i \le (1+ \teps)k/2] = 1-\exp\left(-\Omega\left(\tfrac{c\cdot k/2 \log s}{k}\right)\right) \ge 1- \tfrac{1}{s}\]
	provided that the constant $c$ is large enough.
	Hence, $\DD_i \le (1+\teps)\tfrac12k$ holds in this case with probability $1-\Oh(1/s)$. In particular, this is true for $\teps = \eps$.
	Since $\eps \le \frac13$, this implies $\DD_i < (1-\eps)k$ and hence $\DD_i$ is an $(\eps,k)$-estimation of $d_i$.

	\paragraph*{Case 3: $d_i >4k$.} \cref{lem:X} applied to any fixed subset of $M_i$ of size $4k$ implies that $d'_i > (1-\teps) 4k$ holds with probability $1-O(1/s)$. The following argument is conditioned on that event. The Chernoff bound therefore yields that
	\[\Pr[\DD_i \ge  (1- \teps) d'_i] = -\exp\left(-\Omega\left(\tfrac{c\cdot d'_i \log s}{k}\right)\right) \ge 1- \tfrac{1}{s}\]
	provided that the constant $c$ is large enough.
	Hence, $\DD_i \ge (1-\teps)d'_i \ge (1-\Oh(\teps))4k$ holds with probability $1-\Oh(1/s)$. This remains true even if we account for the fact that $(1-\teps)4k \le d'_i$ may fail to be satisfied with probability $O(1/s)$. Taking $\teps = \Theta(\eps)$ with a sufficiently small constant factor, we conclude that $(1-\eps)4k \le \DD_i$ holds in this case with probability $1-\Oh(1/s)$.	Since $\eps \le \frac13$, this implies $\DD_i > 2(1+\eps)k$ and hence $\DD_i$ is an $(\eps,k)$-estimation of $d_i$.
\end{proof}

\subsection{Alternative Streaming Algorithm for \cref{prob:fixedk}}\label{sec:streamingB}

Below, we describe how to implement \cref{alg:streamingB} in the streaming model.

\begin{theorem}\label{thm:streaming2}
	There exists a streaming algorithm for \cref{prob:fixedk} where the pattern $P$ can be preprocessed in advance and the text arrives in a stream so that $\DD_{i-m+1}$ is reported as soon as $T[i]$ arrives. 
	The space usage of the algorithm is $\tO\big(\min\big(\frac{\sqrt k}{\eps^2},\tfrac{m}{\eps \sqrt k}\big)\big)$ words, the running time per character is $\tO(\eps^{-3})$,
	and the outputs are correct with high probability.
\end{theorem}

As in \cref{sec:stream}, let $D=\{X_b:b\in B_P\}$ be a dictionary with $|B_P|$ strings.
For each $v\in B_T$, let $Y_v$ be a stream such that at time $i$ (after the arrival of $T[i]$)
 \[Y_v\ = \bigodot_{j\le i: \ j \bmod p \,=\,v } T[j].\] Notice that after the arrival of $T[i]$, we have that $Y_v(i-m+1)$ is a suffix of $Y_v$.

\paragraph{Preprocessing phase.}
During the preprocessing phase, the algorithm chooses a random prime $p\in [\hat p , 2\hat p)$, and picks two random  samples $B_P,B_T\subseteq [p]$ with sampling rate $\beta_P=\beta_T=\sqrt{\frac{s}{\eps^2k}}$ for a large enough constant $s$.
The algorithm applies the preprocessing of the multi-stream dictionary algorithm of \cref{lem:multistream_dictionary} on $D$ so that the patterns from $D$ can be matched against the streams~$Y_v$ for $v\in B_T$.

\paragraph{Processing phase.}
After the arrival of $T[i]$, the algorithm checks if $(i\bmod p)\in B_T$, and if so the algorithm adds $T[i]$ into $Y_{i\bmod p}$.
The multi-stream dictionary matching algorithm of \cref{lem:multistream_dictionary}, identifies the longest suffix of $Y_{i\bmod p}$ which is a pattern from $D$, if such a pattern exists. This way, the algorithm maintains a pointer $\pi_v$ to the longest pattern from $D$ that is a current suffix of $Y_v$.

\paragraph{Evaluating $\tilde d_{i-m+1}$.}
Recall that $B_i = \{b \in B_P : (b-i) \bmod p \in B_T\}$. In the following lemma, we state that the algorithm can compute $B_i$ efficiently after the arrival of $T[i]$, using an auxiliary data structure.

\begin{lemma}\label{lem:Bi_data-structure}
	There exists a data structure that at any time $i$ reports the set		$B_i = \{b \in B_P : (b-i) \bmod p \in B_T\}$. 
	The space usage of the data structure is $O(|B_P|+|B_T|)$ and the time of the $i$th update is $\tO(|B_i|)$.
\end{lemma}
\begin{proof}
	
A position $b\in B_P$ is called \emph{active at time $i$} if $b\in B_i$. Notice that a position $v$ is active at time $i$ if and only if $(b-i)\bmod p \in B_T$.

The data structure stores the elements of $B_T$ in a cyclic linked list,
and maintains one handle for each element $b\in B_P$.
These handles are stored in a hash table that maps future time-points into linked lists of elements' handles. 
The algorithm preserves the invariant that at any time $i$, the handle of any $b\in B_P$ is stored in the linked list of the smallest $j\ge i$ such that $b$ is active at time $j$ (i.e. $(b-j)\bmod p\in B_T$).
Each handle $b$ is maintained in the linked list of time $j$ with a pointer to the element $(b-j)\bmod p$ in the cyclic linked list of $B_T$.

During an update (incrementing $i$ to $i+1$), the algorithm first reports all the active elements in the linked list of time $i$.
Then, in order to keep the data-structure up-to-date and preserve the invariant, the algorithm computes for each $b\in B_i$ the smallest $j>i$ such that $b$  is active also at time $j$. 
This computation is done in constant time per stream by advancing the pointer to the cyclic linked list of $B_T$.
Then, the algorithm inserts the handle of $b$ into the linked list of time $j$.
Finally, the algorithm removes the empty linked list of time $i$ from the hash table to reduce the space usage.
\end{proof}

The algorithm uses \cref{lem:Bi_data-structure} to retrieve $B_i$.
Then, the algorithm estimates $d_{i-m+1}$ in \cref{ln:output:alt} of \cref{alg:streamingB}.
In order to test whether or not $X_b=Y_{(b-i)\bmod p}(i)$, the algorithm checks if $X_b$ is a suffix of the pattern pointed to by $\pi_{(b-i)\bmod p}$.

\paragraph{Complexity analysis.}
Notice that $|D|=\tO(|B_P|)$, the length of each pattern in $D$ is $\Theta(\frac mp)$, and the number of streams $Y_v$ is also $\tO(|B_T|)$.
Consequently, the space usage of the multi-stream dictionary of \cref{lem:multistream_dictionary} is $\tO(|B_P|+|B_T|)$. The space usage of the auxiliary data structure of \cref{lem:Bi_data-structure}  is also $O(|B_P|+|B_T|)$, so the total space usage of the algorithm is $\tO(|B_P|+|B_T|)$.
To bound this quantity, we introduce the following auxiliary fact.

\begin{fact}\label{lem:chernoff}
	If $X \sim B(n,\beta)$ is a binomial random variable, then $\Pr[X \ge 2n\beta + \log s]\le \frac 1{s}$ holds for every $s\ge 1$.
\end{fact}
\begin{proof}
	Markov's inequality yields the claim:
	\[\Pr[X \ge 2n\beta + \log s] = \Pr[2^X \ge  2^{2n\beta}s ] \le \frac{\E[2^X]}{ 2^{2n\beta}s } = \frac{((1-\beta)+2\beta)^n}{2^{2n\beta}s }=\left(\frac{1+\beta}{2^{2\beta}}\right)^n\frac{1}{s} < \frac{1}{s}.\qedhere \]
\end{proof}

By \cref{lem:chernoff}, the total space usage of the algorithm is $\tO(|B_P|+|B_T|)=\tO(p\sqrt{\frac{s}{\eps^2k}})=\tO(\frac p{\eps\sqrt k})=\tO(\frac{\min(\eps^{-2}k,\eps^{-1}m)}{\sqrt k})$ with high probability.

As for the running time,
notice that by \cref{lem:chernoff} with high probability the size $|B_i|$ (for all $i\in [n]$) is $\tO(p\cdot\frac {s}{\eps^2k})=\tO(\frac p{\eps^2k})=\tO(\frac {\min(\eps^{-1}k,m)}{\eps^2k})=\tO(\min (\eps^{-3},\eps^{-2}\frac mk))$.
After the arrival of each character $T[i]$ the algorithm passes $T[i]$ to at most one stream, which costs $\tilde O(1)$ time (by the algorithm of \cref{lem:multistream_dictionary}).
The evaluation of $\DD_{i-m+1}$ is executed by counting the number of mismatches in all the positions of $B_i$. The time per position in $|B_i|$ is $\tO(1)$;
hence, the total time per character is $\tO(|B_i|)=\tO(\min (\eps^{-3},\eps^{-2}\frac mk))$.

Due to \cref{lem:alternative}, the estimation $\hat d_{i-m+1}$ follows the requirements of \cref{prob:fixedk} with constant probability for each index $i$.
In order to amplify the correctness probability, $\O(\log n)=\tilde O(1)$ instances of the described algorithm are run in parallel, and using the standard median of means technique, the correctness probability becomes $1-n^{-\Omega(1)}$ with just an $O(\log n)$ multiplicative overhead in the complexities.
Hence, \cref{thm:streaming2} follows.

\subsection{More General Problems}

A streaming algorithm for \cref{prob:apx-k,prob:apx} is obtainable from the algorithm of \cref{thm:streaming2} as described in \cref{sec:preliminaries}.
The only difference is that we only use thresholds which are powers of two and are smaller than $k$ (instead of smaller than $m$). The running time of the algorithm is 
\[\sum_{\substack{k'\le k\\ k \text{ is a power of } 2}}\tO(\eps^{-3})\:=\:\tO(\eps^{-3}).\]
The space usage of the algorithm is 
\[\tO\left(\sum_{\substack{k'\le k\\ k \text{ is a power of } 2}}\min(\tfrac{\sqrt {k'}}{\eps^2},\tfrac{\sqrt m}{\eps k'})\right)\:=\:\tO(\min(\eps^{-2}\sqrt k,\eps^{-1.5}\sqrt m)).\]

The following result follows.

\begin{theorem}\label{thm:streaming-kmm2}
	There exists a streaming algorithm for \cref{prob:apx-k} using $\tO(\min(\eps^{-2}\sqrt {k},\eps^{-1.5}\sqrt m))$ words of space and costing $\tO(\eps^{-3})$ time per character. 
	For every $i\in [n]\setminus[m-1]$, after the arrival of $T[i]$, the algorithm reports $\tilde d_{i-m+1}$ which  with high probability is an $(\eps,k')$-estimation of $d_i$ for any $k'\le k$.
\end{theorem}

\end{document}